\newtheorem{proposition}{Proposition}
\newtheorem{corollary}{Corollary}
\def\tr{\operatorname{tr}}
\newtheorem{theorem}{Theorem}
\DeclareOldFontCommand{\rm}{\normalfont\rmfamily}{\mathrm}
\begin{document}

\title{Speed limits on correlations in bipartite quantum systems}
\author{Vivek Pandey}\email{vivekpandey@hri.res.in}

\affiliation{Harish-Chandra Research Institute,\\  A CI of Homi Bhabha National
Institute, Chhatnag Road, Jhunsi, Prayagraj 211019, India
}

\author{Divyansh Shrimali}\email{divyanshshrimali@hri.res.in}

\affiliation{Harish-Chandra Research Institute,\\  A CI of Homi Bhabha National
Institute, Chhatnag Road, Jhunsi, Prayagraj 211019, India
}

\author{Brij Mohan}\email{brijmohan@iisermohali.ac.in}

\affiliation{Harish-Chandra Research Institute,\\  A CI of Homi Bhabha National
Institute, Chhatnag Road, Jhunsi, Prayagraj 211019, India
}
\affiliation{Department of Physical Sciences, Indian Institute of Science Education and Research (IISER), Mohali-140306, India}

\author{Siddhartha Das}\email{das.seed@iiit.ac.in}
\affiliation{Center for Security, Theory and Algorithmic Research (CSTAR), International Institute of Information Technology, Hyderabad, Gachibowli, Telangana 500032, India}

\author{Arun Kumar Pati}
\email{akpati@iiit.ac.in}
\affiliation{Centre for Quantum Science and Technology (CQST), International Institute of Information Technology, Hyderabad, Gachibowli, Telangana 500032, India}

\begin{abstract}
Quantum speed limit is bound on the minimum time a quantum system requires to evolve from an initial state to final state under a given dynamical process. It sheds light on how fast a desired state transformation can take place which is pertinent for design and control of quantum technologies. In this paper, we derive speed limits on correlations such as entanglement, Bell-CHSH correlation, and quantum mutual information of quantum systems evolving under dynamical processes. Our main result is speed limit on an entanglement monotone called negativity which holds for arbitrary dimensional bipartite quantum systems and processes. Another entanglement monotone which we consider is the concurrence. To illustrate efficacy of our speed limits, we analytically and numerically compute the speed limits on the negativity, concurrence, and Bell-CHSH correlation for various quantum processes of practical interest. We are able to show that for practical examples we have considered, some of the speed limits we derived are actually attainable and hence these bounds can be considered to be tight.
\end{abstract}

\maketitle
\section{Introduction}

The quantum speed limit (QSL) is a fundamental limit imposed by quantum mechanics on the rate at which any quantum system evolves under a given dynamical process~\cite{Mandelstam1945,Margolus1998,Anandan1990,OO18,SCMD18}. It provides bound on the minimal time required to transport a quantum system from its initial state to a final state under a given dynamical process. Determination of quantum speed limits are pertinent for the design and realization of quantum technologies, e.g., quantum computing~\cite{AGN12,Aifer2022,Mohan2022}, quantum metrology~\cite{Campbell2018}, optimal control theory~\cite{Caneva2009,Campbell2017}, quantum thermodynamics~\cite{Das2018,Mukhopadhyay2018,F.Campaioli2018,Mohan2021}, etc. As quantum correlations are critical aspects of quantum theory from both fundamental and applied aspects, it is only natural to explore speed limits on quantum correlations. Entanglement lies at the heart of quantum theory as there is no classical counterpart to it~\cite{EPR1935,Bell1964}. Entanglement has proven to be a resourceful quantum correlation for several information processing tasks, e.g., quantum communication~\cite{Wiesner1992}, quantum cryptography~\cite{Ekert1992}, quantum computation~\cite{Jozsa1997,Jozsa2003}, quantum random number generators~\cite{CR12}, quantum metrology~\cite{Jonathon2008}, etc. However, there are fundamental limitations on the entangling abilities of bipartite and many body quantum interactions (cf.~\cite{Dur2001,Acoleyen2013,Marien2016,ADD15,Das2020,S_Das2021}).

Seminal works in Refs.~\cite{Mandelstam1945,Margolus1998,Anandan1990} have led to the current advancements of QSL and better understanding of its applications. QSL now is applied to wide range of topics in quantum information theory, e.g.,~\cite{S.Deffner2017,ST20,BDLR21,del21}. It has been extensively studied for closed system dynamics~\cite{Mandelstam1945,Margolus1998,Levitin2009,Anandan1990,Gislason1956,Eberly1973,Bauer1978,Bhattacharyya1983,Leubner1985,Vaidman1992,Uhlmann1992,Uffink1993,Pfeifer1995,Horesh1998,Soderholm1999,Giovannetti2004,Andrecut2004,Gray2005,Luo2005,Batle2005,Borras2006,Zielinski2006,Zander2007,Andrews2007,Kupferman2008,Yurtsever2010,Fu2010,Jones2010,Chau2010,Zwierz2012,S.Deffner2013,Fung2013,Poggi2013,Fung2014,Andersson2014,D.Mondal2016,Mondal2016,S.Deffner2017,Campaioli2018,Dimpi2022} and many progresses also have been recently made for open quantum dynamics~\cite{Deffner2013,Campo2013,Taddei2013,Pires2016,Jing2016,S.Deffner2020}. It has been found  that, for certain classes of states, quantum entanglement enhances the speed of evolution of composite quantum systems~\cite{Giovannetti2003}. There are progresses made in the direction to derive bounds on the maximal rates at which any Hamiltonian interaction can generate entanglement in bipartite and multipartite quantum systems~\cite{Marien2016,Vershynina2015,S_Das2021}. In Ref.~\cite{Camaioli2020}, speed limits on quantum resources were derived to study how quickly these resources can be generated or degraded by physical processes. In Refs.~\cite{Bera2013,Rudnicki20201}, QSL on entanglement for unitary dynamics has been derived using a geometric measure of entanglement. Whereas, quantum speed limits for entanglement and quantum discord have been derived by using distance based measure in Ref.~\cite{Paulson2022}. These speed limits on entanglement are often based on entanglement monotones which require optimization over the set of separable states~\cite{Camaioli2020,Paulson2022}.

In this paper, we derive speed limits on some of the widely discussed correlations, namely entanglement, quantum mutual information, and Bell-Clauser-Horne-Shimony-Holt (Bell-CHSH) correlations in bipartite quantum systems undergoing arbitrary dynamical processes. To derive speed limits on entanglement, we consider entanglement monotones like the negativity~\cite{Peres1996,Vidal2002} and concurrence~\cite{Wootters1998,Wootters2001,Rungta2001} which are comparatively easier to compute. The speed limit on the negativity we derive is applicable for arbitrary systems and dynamical processes, i.e., for both discrete and continuous variable quantum systems and processes. One of the major interest in calculation of the negativity of a bipartite state is that it is a necessary criterion for the state to be entanglement distillable~\cite{Horodecki1996, Rai01}. Our speed limits are lower bounds on the minimal time required for the changes in the negativity, concurrence, entropy, Bell-CHSH correlation, and quantum mutual information of bipartite quantum systems undergoing time-evolution. Here the time-evolution of a quantum system from its initial state to the final state is depicted by dynamical process describable by a completely positive trace preserving (CPTP) map (see e.g.,~\cite{Caruso14,Buscemi2016}), which is also called quantum channel. However, for an interval in between of initial and final states the process need not be CPTP~\cite{Jordan04,Rivas2012}. Therefore, speed limits that we derive are in general applicable for dynamics with or without memory-effect of environment (or bath)~\cite{An07,Caruso14,Das2018}. Some of these speed limits are tight and hence can be attainable for some quantum processes. We illustrate efficacy of our speed limits by applying them to various dynamical processes of practical interests~\cite{Dur2001,Lidar2019,Carrega2020,Weiss2006,Jing2013,Schindler2013,Childs2001,Mohseni2006,Maziero2010}, e.g., unitary process due to nonlocal Hamiltonian and open quantum dynamics due to pure dephasing process, depolarizing process, and amplitude damping process.

The organization of this paper is as follows. In Section~\ref{sec:prem}, we discuss the preliminaries and background required to arrive at the main results of this paper. In Section~\ref{sec:QSL}, we obtain speed limits on the negativity for arbitrary dynamics. In Section~\ref{sec:QSL-2}, we obtain speed limits on the concurrence and I-concurrence for unitary dynamics. In Section~\ref{sec:QSL-3}, we obtain speed limits on other correlations such as Bell-CHSH correlation, quantum mutual information, and entropy. In Section~\ref{sec:Examples}, we have analytically and numerically computed obtained speed limits for quantum systems evolving under some classes of interaction processes of practical interests. We also discuss tightness of these limits by considering some instances where the obtained bounds are attainable and some instances where they are loose. Finally, we provide concluding remarks in the last section.

\section{Preliminaries}\label{sec:prem}
In this section, we briefly review some of the standard notations and results common in the literature of quantum information theory.

Let $\mathcal{H}$ represents a separable Hilbert space with $\dim(\mathcal{H})$ that can be either finite ($<\infty$) or infinite ($=\infty$). Let $\cal{B}({\cal{H}})$ denote the algebra of bounded linear operators acting on $\cal{H}$ with $\mathbbm{1}_{\mathcal{H}}$ denoting the identity operator.  State of a quantum system is described by a density operator defined on the Hilbert space. Let $\cal{D}(\cal{H})$ denote the set of density operators acting on the Hilbert space $\cal{H}$. If $\rho \in\mathcal{D}(\mathcal{H})$ then it satisfies following properties: $\rho = \rho^{\dagger},\,\rho\geq 0,\,\tr(\rho)=1$. Let $\cal{H}_{A}$ denote the Hilbert space associated with a quantum system $A$. The state of quantum system $A$ is denoted  by density operator $\rho_{A} \in \cal{D}({\cal{H}}_{A})$. We denote by $\cal{H}_{AB}:=\cal{H}_{A}\otimes\cal{H}_{B}$ the Hilbert space of a bipartite  system $AB$. The state of bipartite system $AB$ is denoted by density operator $\rho_{AB}\in \cal{D}(\cal{H}_{AB}) $ and the density matrix of a subsystem, say $A$, which is given by the partial trace over the other subsystem $B$, i.e., $\rho_{A} = \tr_{\rm B}(\rho_{AB})$. A quantum system whose state is exactly known is said to be in pure state. The density operator representing a pure state $\psi\in\mathcal{D}(\mathcal{H})$ is a rank-one projection operator i.e. $\psi = \op{\psi}$, where $\ket{\psi}\in\mathcal{H}$. Otherwise, $\rho$ is in a mixed state and can be written as $\rho  =\sum_{i}p_{i}\psi_{i}$, where each $\psi_{i}$ is a pure density matrix with $\sum_{i}p_{i}=1$ and $p_{i}>0$ for all $i$.

The von-Neumann entropy of a quantum state $\rho_A \in \cal{D}(\cal{H}_A)$ is defined as:
 \begin{equation}
     S(A)_{\rho}:=S(\rho_A)= -\tr({\rho\ln{\rho}}).
 \end{equation}
 The von-Neumann entropy of a state $\rho_A$ is always non-negative and $S(A)_{\rho}=0$ if and only if the state $\rho_A$ is pure.
  
 The quantum relative entropy between any $\rho \in \cal{D}(\cal{H})$ and $\sigma\geq 0$ is defined as \cite{Umegaki1962}
 \begin{equation}
    D(\rho\Vert\sigma) := 
       \begin{cases}
                \tr(\rho(\ln{\rho}-\ln{\sigma})) &  \text{if}\  \operatorname{supp}(\rho)\subseteq \operatorname{supp}(\sigma), \\ 
                +\infty & \text{otherwise},
       \end{cases}
 \end{equation}
where $\operatorname{supp}(\rho)$ and $\operatorname{supp}(\sigma)$ are the supports of $\rho$ and $\sigma$, respectively.

The Schatten-p norm of an operator $O \in \cal{B}(\cal{H})$ is defined as:
\begin{equation}
       \norm{O}_{p}= \left(\tr \left|O\right|^{p}\right)^{1/p},
\end{equation}
where $\left|O\right|= \sqrt{O^{\dagger}O}$, $p\geq1,\ p\in\mathbb{R}$.  The operator norm, the Hilbert-Schmidt norm, and the trace norm corresponds to $p=\infty,2,1$ respectively and satisfy the inequality $\norm{A}_{\rm \infty} \leq \norm{A}_{\rm 2}   \leq \norm{A}_{\rm 1} $.

 \subsection{Quantum dynamics}
In the Schr\"odinger picture, the observables of the system are fixed and density operator evolves with time~\cite{Sud61}. The evolution of the density operator is governed by a linear, completely positive and  trace-preserving (CPTP) map called quantum channel. Although quantum channels provide a general way to describe the evolution of density operators for pure and mixed states both but sometimes it is convenient to use differential equations to describe the evolution instead of quantum channels. For example, evolution of pure states can be described by  unitary transformations which can easily be written in the form of differential equation (Schr\"{o}dinger equation). Under the assumption that evolution of density operator is smooth, the evolution of density operator can be written in the form of linear differential equation (called master equation):
 \begin{equation}
      \dot{\rho_t}:=\diff{\rho_t}{t}=\mathcal{L}_t(\rho_t) \label{Master_equation_density_operator},
 \end{equation}
where $\rho_t $ is the state of the system at time $t$ and $\mathcal{L}_t$ is the Liouvillian super-operator~\cite{Rivas2012} which in general can be time independent or time-dependent.

Let us  define initial and final states of time evolving quantum system as $\rho_{0}:=\rho_{t=0}$ and $\rho_{T}:=\rho_{t=T}$, respectively. We can drop the suffix $t$ if $\cal{L}_{t}$ is time independent and in this case the above evolution equation has the following formal solution:
\begin{equation}
    \rho_{t} = \operatorname{e}^{\cal{L}t}(\rho_{0}).
\end{equation}
Let $\Phi$ denote a linear CPTP map which maps density operator to density operators. The map $\Phi$ is called unital if $\Phi(\mathbbm{1}_{\cal{H}})=\mathbbm{1}_{\cal{H}}$. The adjoint map $\Phi^\dag:\cal{B}(\cal{H})\rightarrow\cal{B}(\cal{H})$ of $\Phi$ is a unique linear map that satisfies $\tr({\cal{O}\Phi(\rho)}) = \tr({\Phi^\dag}(\cal{O})\rho) $, $\forall \rho \in \cal{D}(\cal{H}), \cal{O} \in \cal{B}(\cal{H})$. The adjoint of a trace preserving map is unital.
 
 In the Heisenberg picture the density operator of the system is fixed and the observable evolves with time~\cite{Sud61}. The evolution is given by the adjoint (or dual) map ${\Phi^{\dag}}$  which keeps density operator fixed and takes an input observable to an output observable. Note that ${\Phi}^{\dag}$ may not be a quantum channel because in the case of observables, trace preserving condition is not necessary. In differential equation form the time evolution of an observable is given by adjoint-master equation:
 \begin{equation}
     \dot{\cal{O}}_t:= \diff{ \cal{O}_t }{t} =\cal{L}_{t}^{\dagger}(\cal{O}_t)\label{Master_equation_density_observable},
 \end{equation}
  where $\cal{O}_t $ is the observable of the system at time $t$ and  $\cal{L}^{\dagger}_{t}$ is adjoint of the Liouvillian super-operator. We define initial and final observable of time evolving quantum system as $\cal{O}_{0}:=\cal{O}_{t=0}$ and $\cal{O}_{T}:=\cal{O}_{t=T}$, respectively.

 \subsection{Correlations in bipartite quantum systems}
Quantum entanglement is a type of (non-classical) correlation found in bipartite and multipartite quantum systems. The concurrence~\cite{Wootters1998,Wootters2001}, the  negativity~\cite{Peres1996,Vidal2002} and the entanglement entropy~\cite{Bennett1996} are some of the widely discussed entanglement monotones. In the sequel, we define some other correlation quantifiers that we will be using for our purpose.

{\it Concurrence.---} The concurrence quantifies the entanglement present in a two-qubit quantum system. The square of concurrence of a two-qubit pure state $\psi_{AB}=\op{\psi_{AB}}$ is given as~\cite{Wootters1998,Wootters2001,Rungta2001}

\begin{equation}
    \mathscr{C}^{2}(\psi_{AB}) := {{\tr(\psi\mathcal{R}({\psi^{*}}))}},
\end{equation}
where, $\mathcal{R}(\psi):= (\sigma_{y}\otimes \sigma_{y}) \psi (\sigma_{y}\otimes \sigma_{y})$, $\sigma_y$ is the Pauli operator and $^{*}$ is the complex conjugation operation. The value of concurrence lies between 0 to 1 (i.e., $0\leq\mathscr{C}(\psi)\leq1$). For maximally entangled states (i.e., Bell states) $\mathscr{C}(\psi)=1$, while for product states $\mathscr{C}(\psi)=0$. 

The square of I-concurrence for a pure two-qudit state $\psi_{AB} =\op{\psi_{AB}}$ is given as~\cite{Rungta2001}
\begin{equation}
     \mathscr{C}_{I}^{2}(\psi_{AB}): = 2\nu_{d_{A}}\nu_{d_{B}}[ 1-\tr(\rho^2_{A})],
\end{equation}
where $\dim(\cal{H}_{A})=d_{A}$, $\dim(\cal{H}_{B})=d_{B}$, and $\nu_{d_{A}},\nu_{d_{B}}$ are constants. The I-concurrence reduces to concurrence for  $\nu_{d_{A}}=1$, and  $\nu_{d_{B}}=1$. The I-concurrence varies from 0 for pure product state to $\sqrt{2(d-1)/d}$, where $d=\min\{\dim(\cal{H}_{A}), \dim(\cal{H}_{B})\}$, for a maximally entangled state.

{\it Negativity.---} The negativity is an entanglement monotone to quantify the amount of entanglement present in an arbitrary bipartite quantum state. It is derived from the positive partial transpose (PPT) criterion for the separability of a bipartite quantum states. For an arbitrary bipartite quantum state $\rho_{AB}$, its negativity is defined as~\cite{Vidal2002}
\begin{equation}
\mathscr{N}(\rho) := \frac{\norm{\rho_{AB}^{\Gamma_B}}_{1}-1}{2},
\end{equation}
where $\rho_{AB}^{\Gamma_B}=\Gamma^{B}(\rho_{AB})$., and ${\Gamma_B}$ is partial transpose operation~(see Appendix \ref{pattial_transpose}). 

Note that, the negativity can be zero even when the state is entangled because the PPT criterion is only necessary but not sufficient for bipartite state to be separable. However, for two-qubit states (and also for bipartite state where one system is qutrit while the other is qubit), the set of PPT states and separable states coincide (are same). The negativity is a convex function i.e, $\mathscr{N}(\sum_{i}p_{i}\rho_{i})\leq\sum_{i}p_{i}\mathscr{N}(\rho_{i})$, where $\{p_{i}\}_i$ is probability distribution (hence, $\sum_{i}p_{i}=1$ and $p_i\geq 0$ for all $i$) and $\rho_{i}$'s are density operators. All states that are not PPT are called NPT states. NPT states are entangled. Therefore, the non-zero negativity implies that the given bipartite state is entangled.

{\it Entanglement entropy.----}
 The entanglement entropy is an entanglement monotone and quantifies the amount of entanglement present between subsystems $A$ and $B$ when the joint state of $AB$ is pure, i.e., $S(AB)_\psi=0$.  The entanglement entropy of composite system $AB$ in a pure state $\psi_{AB}$ is defined as~\cite{Bennett1996}
\begin{equation}
    E_S(\psi_{ AB}):=S(\rho_A)=S(\rho_B)= - \tr(\rho_{ A}\ln\rho_{ A}),
\end{equation}
where $\rho_{A}:=\tr_{\rm B}(\psi_{AB})$.

{\it Bell-CHSH observable.---} The non-classical correlation existing in a bipartite quantum state is detected by the violation of Bell-CHSH inequality. These correlations appear to be stronger than anything explained by classical physics. The violation of Bell-CHSH inequality for a given bipartite quantum state $\rho_{AB}$ can be checked by estimating the expectation value of the Bell observable $\cal{B}$. For a two-qubit systems, which may be represented as a pair of spin-1/2 particles, the Bell-CHSH observable has the following
general form~\cite{CHSH1969,Horodecki1995}
\begin{equation}
    \cal{B} := \hat{a}.\vec{\sigma} \otimes \left(\hat{b}+\hat{b'}\right).\vec{\sigma} + \hat{a'}.\vec{\sigma} \otimes \left(\hat{b}-\hat{b'}\right).\vec{\sigma},
\end{equation}
where $\hat{a}$, $\hat{a'}$, $\hat{b}$ and $\hat{b'}$ are unit vectors in a $3$-dimensional Euclidean space $\mathbb{R}^{3}$ and $\vec{\sigma}$ is the Pauli spin vector operator. The state $\rho_{AB}$ violates Bell-CHSH inequality for $\abs{\langle\cal{B}\rangle_\rho}> 2$. The optimal Bell measurement settings for Bell observable for any given bipartite state $\rho_{AB}$ has been derived in Ref.~\cite{Horodecki1995}. Any state that violates the Bell-CHSH inequality is said to be nonlocal as no local-realistic hidden variable models can depict such correlations.

{\it Quantum mutual information.---} The correlation present in bipartite quantum system can be quantified by quantum mutual information. The quantum mutual information of a bipartite state $\rho_{AB}$ is defined as
\begin{equation}
    {I}(A;B)_\rho:= S({A})_{\rho}+S({B})_{\rho}-S({AB})_{\rho},
\end{equation}
where $\rho_A$ and $\rho_B$ are reduced density matrices of $\rho_{AB}$. The quantum mutual information can also be stated in
terms of the quantum relative entropy as
  \begin{align}
  {I}(A;B)_\rho :=\min_{\omega_{A} \in \cal{D}(\cal{H}_{A}), \omega_{B}\in \cal{D}(\cal{H}_{B})}D(\rho_{AB}||\omega_{A}\otimes \omega_{B}).\label{equ:Quantum_Mutual_Information}
\end{align}
The quantum mutual information $I(A;B)$ of any bipartite  state $\rho_{AB}$ is a non-negative quantity i.e., $I(A;B)\geq0$ (where equality holds if and  only if the state is factorize i.e., a product state) and upper bounded by $2\ln\left({\rm min}\{\rm dim(\cal{H}_{A}),dim(\cal{H}_{B})\}\right)$.

\subsection{Speed limits on observable}

The speed limit on observables are defined as a bound on the maximum evolution speed of the expectation value of a given observable of a quantum system undergoing dynamical evolution, which might be unitary or non-unitary. It sets the lower bound on the evolution time of the quantum system needed to evolve between different expectation values of a given observable. The bound on evolution time of expectation value of an observable for an arbitrary dynamics reads as (see Appendix \ref{QSL:Observable})
\begin{equation}
 T \geq  T_{\rm OQSL}= \frac{\left| \langle\cal{O}_{T}\rangle_{\rho}-\langle\cal{O}_{0}\rangle_{\rho}\right|}{\norm{\rho}_{1}}\max\left\{\frac{1}{\Lambda_{T}^{\infty}},\frac{1}{\Lambda_{T}^{1}},\frac{1}{\Lambda_{T}^{2}}\right\},
 \end{equation}
where $\Lambda_{T}^{\alpha}=\frac{1}{T}\int_{0}^{T}{\rm d}t\norm{\cal{L}_{t}^{\dagger}(\cal{O}_{t})}_{\alpha}$ for $\alpha\in\{1,2,\infty\}$ is the evolution speed of the observable of the given system under the dynamics $\dot{\cal{O}}_t:= \diff{ \cal{O}_t }{t} =\cal{L}_{t}^{\dagger}(\cal{O}_t)$.

 Note that speed limits on observables have been previously derived in Refs.~\cite{B.Mohan2021,Pintos2021} using the Cauchy-Schwarz inequality. Our derivation employs the H\"older's inequality, and therefore, our bound is more general than the previous ones. 
In Refs.~\cite{B.Mohan2021,Carabba2022}, state-independent speed limits on observables are derived by considering the Hilbert-Schmidt inner product for observables. Furthermore, Ref.~\cite{Nikals2022} formulates the speed limit of super-operators, which are operators in the operator space, and showed some applications in many-body physics.

\section{Quantum Speed Limits}
 In general, quantum speed limits represent fundamental constraints imposed by the quantum theory on the evolution speed of quantum systems. Entangling abilities of quantum interactions are of wide interest from both fundamental and applied aspects, see e.g.,~\cite{S.Das2018,S_Das2021}. Quantum correlations are fundamental in the quantum information theory as they act as resources for several quantum information processing tasks. For instances, entanglement and nonlocal quantum correlations are useful properties (resources) for the tasks of teleportation, quantum key distribution, quantum communication, quantum sensing, etc.~\cite{Bennett1993, Ekert1992,CR12, Jonathon2008,S_Das2021}. 
 
 In this section, we discuss limitations on the minimal time taken for changes in some desirable correlation measures of bipartite quantum systems undergoing bipartite dynamical processes. Our main focuses are speed limits on the negativity and concurrence, which are entanglement monotones useful in the resource theories of entanglement, see e.g.,~\cite{Horodecki2009, Das_S_2018,BDWW19}. We also inspect speed limits on other correlations, namely Bell-CHSH observable $\cal{B}$ and quantum mutual information ${I}(A;B)_\rho$.
 
 We note that speed limits on entanglement for unitary dynamics has been studied earlier using geometric measure of entanglement~\cite{Bera2013,Rudnicki20201}. In Refs.~\cite{Camaioli2020}, speed limits on entanglement has been obtained using divergence based measure for open quantum dynamics. The bounds obtained in Refs.~\cite{Bera2013,Rudnicki20201,Camaioli2020,Paulson2022} are challenging to calculate in general as it requires optimization over all separable sates. Here, we have obtained speed limits on entanglement using the negativity and the concurrence, which is arguably easier to calculate. Moreover, speed limits on entanglement obtained using the negativity is applicable for arbitrary dynamics, i.e., for both closed and open dynamical processes. Speed limits obtained for negativity also provide limitations on the minimal time required for transitions of a PPT state to NPT (non-negative under partial transposition) and NPT state to PPT under any given dynamical processes.

\subsection{Speed limit on the negativity}\label{sec:QSL}
We now discuss the first main result of this work that provides a lower bound $T_{\rm NSL}$ on the time taken for the change in the value of negativity of a bipartite quantum system evolving under a given dynamical processes.
\begin{theorem}\label{thm:1}
 Consider any bipartite quantum system $AB$, where each of $\dim(\cal{H}_A)$ and $\dim(\cal{H}_B)$ can be either finite or infinite. The minimal time $T$ taken for the bipartite system to bring certain amount of change in its negativity by evolving under an arbitrary given quantum dynamical process with associated Liouvillian $\cal{L}_t$ is lower bounded by 
\begin{equation}\label{eq:theorem-1}
T \geq  T_{\rm NSL}= \frac{2\abs{\mathscr{N}(\rho_{T})-\mathscr{N}(\rho_{0})}}{ \Lambda^{N}_{T}},
\end{equation}
where $\rho_0$ is the initial state (at $t=0$), $\rho_T$ is the final state (at $t=T$), and $\Lambda^{N}_{T}:=\frac{1}{T}\int_{0}^{T}{\rm d}t\norm{\cal{L}_t(\rho^{T_{B}})}_{1}$. $\Lambda^N_T$ can be interpreted as the evolution speed of the negativity of the given system and process.
\end{theorem}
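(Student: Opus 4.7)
The plan is to reduce the change in negativity to a change in trace norm of the partial transpose of the state, and then bound that change by the integrated $L^1$-norm of its time derivative. Starting from
$$|\mathscr{N}(\rho_T) - \mathscr{N}(\rho_0)| = \tfrac{1}{2}\bigl|\,\|\rho_T^{\Gamma_B}\|_1 - \|\rho_0^{\Gamma_B}\|_1\,\bigr|,$$
I would express the right-hand side as $\frac{1}{2}\bigl|\int_0^T \tfrac{d}{dt}\|\rho_t^{\Gamma_B}\|_1\,dt\bigr|$ using smoothness of the trajectory, then move the absolute value inside the integral by the triangle inequality to obtain $\tfrac{1}{2}\int_0^T \bigl|\tfrac{d}{dt}\|\rho_t^{\Gamma_B}\|_1\bigr|\,dt$.

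The central analytic inequality I need is $\bigl|\tfrac{d}{dt}\|A(t)\|_1\bigr| \leq \|\dot A(t)\|_1$ for any differentiable trace-class-valued path. This drops out of the reverse triangle inequality $\bigl|\,\|A(t{+}\delta)\|_1 - \|A(t)\|_1\,\bigr| \leq \|A(t{+}\delta) - A(t)\|_1$ on dividing by $|\delta|$ and taking the limit; notably it needs no smoothness of the spectrum, so eigenvalue crossings of $\rho_t^{\Gamma_B}$ at zero (which are precisely what drives negativity change) cause no trouble. Applying this with $A(t) = \rho_t^{\Gamma_B}$ (which is Hermitian since $\rho_t$ is), and commuting $\Gamma_B$ past $d/dt$ by linearity and time-independence of the partial transpose, I get
$$2\,|\mathscr{N}(\rho_T) - \mathscr{N}(\rho_0)| \leq \int_0^T \|(\mathcal{L}_t(\rho_t))^{\Gamma_B}\|_1\,dt.$$

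To match the integrand with the notation $\|\mathcal{L}_t(\rho^{T_B})\|_1$ in the statement, I would interpret the latter either as the partial transpose taken after the action of $\mathcal{L}_t$ or, equivalently, as $\tilde{\mathcal{L}}_t(\rho_t^{\Gamma_B})$ where $\tilde{\mathcal{L}}_t := \Gamma_B \circ \mathcal{L}_t \circ \Gamma_B$ is the Liouvillian conjugated by the involutive partial transpose; both readings produce the same trace norm. Dividing both sides of the resulting inequality by $T$ and rearranging with the definition $\Lambda^N_T := \frac{1}{T}\int_0^T \|\mathcal{L}_t(\rho^{T_B})\|_1\,dt$ isolates $T \geq 2|\mathscr{N}(\rho_T)-\mathscr{N}(\rho_0)|/\Lambda^N_T$, which is the claim.

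The main obstacle I anticipate is the infinite-dimensional case, since the theorem allows $\dim(\mathcal{H}_{A/B}) = \infty$: one must check that $\rho_t^{\Gamma_B}$ stays trace class along the trajectory and that $\dot\rho_t$ is differentiable in trace norm, so that both the reverse-triangle step and the fundamental theorem of calculus applied to $t \mapsto \|\rho_t^{\Gamma_B}\|_1$ are justified. Once that regularity is in hand, the argument is essentially one linearity observation about partial transpose plus one reverse-triangle bound — it does not require Hölder, Cauchy–Schwarz, or any optimization over separable states, which is precisely why the resulting bound is applicable beyond the settings considered in prior work.
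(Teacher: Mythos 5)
Your proposal is correct and follows essentially the same route as the paper: reduce the change in negativity to the change in $\|\rho_t^{T_B}\|_1$, bound $\bigl|\tfrac{d}{dt}\|\rho_t^{T_B}\|_1\bigr|$ by $\|\dot\rho_t^{T_B}\|_1$, and integrate. Your derivation of that key derivative bound via the reverse triangle inequality is in fact a cleaner and more watertight version of the paper's argument, which reaches the same inequality through a Taylor expansion of $|\rho_{t+\epsilon}^{T_B}|$ followed by a one-sided application of $\tr|A+B|\le\tr|A|+\tr|B|$ inside an absolute value; your remarks on trace-norm differentiability in infinite dimensions address regularity assumptions the paper simply posits.
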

\begin{proof}
 Consider the evolution of bipartite state $\rho$ in the time interval interval $ \mathscr{I}:=[0, T]$.
 The negativity of time evolved bipartite state $\rho_{t}$ given by
\begin{equation}
\mathscr{N}(\rho_t)  = \frac{\norm{\rho^{T_{B}}_t}_{1}-1}{2}.
\end{equation}
  Now consider the $\epsilon$ neighbourhood of $t\in \mathscr{I}$ (i.e. an interval $(t-\epsilon, t+\epsilon)$ where  $\epsilon$ is a number arbitrarily close to zero), and the following difference
\begin{equation}
  \mathscr{N}(\rho_{t+\epsilon})-\mathscr{N}(\rho_t) = \frac{1}{2}\left(\tr\left|\rho^{T_{B}}_{t+\epsilon}  \right| - \tr\left|\rho^{T_{B}}_{t}\right|\right).
\end{equation}
We assume that evolution of density operator  $\rho$ is smooth, i.e., $\rho_{t}$ is differentiable at each $t \in \mathscr{I} $, which also implies that $\dot{\rho}^{T_{B}}_{t}$ and $\tr\left|\dot{\rho}^{T_{B}}_{t}\right|$ are  also well defined at each $t \in \mathscr{I}$.  We further assume that negativity is differentiable in $\mathscr{I}$, so the  left hand derivative and the right hand derivative of negativity must be equal at each point in $\mathscr{I}$ and also equal to the derivative of negativity. Now, multiplying by $\frac{1}{\epsilon}$ on both the sides of above equation and taking limit $\epsilon\to 0$, we obtain
\begin{align}
     \frac{{\rm d} }{{\rm d}t}\mathscr{N}(\rho_t) & =\frac{1}{2}\lim_{\epsilon \to 0}\frac{\tr\left|\rho^{T_{B}}_{t+\epsilon}  \right| - \tr\left|\rho^{T_{B}}_{t}\right|}{\epsilon} \label{equ:negativity_derivative}.
\end{align}
Using the Taylor expansion we have: \\
\begin{equation}
    \left| \rho^{T_{B}}_{t+\epsilon}  \right| = \left| \rho^{T_{B}}_{t}+\epsilon \dot{\rho}^{T_{B}}_{t} + o(\epsilon^2)  \right|, \label{equ:taylor_expansion}
\end{equation}
Let us now take the absolute value on both the sides of Eq.~\eqref{equ:negativity_derivative} and use Eq.~\eqref{equ:taylor_expansion} to calculate the limit, we then get
\begin{align}
      \left|\frac{{\rm d} }{{\rm d}t}\mathscr{N}(\rho_t)\right| &=\frac{1}{2}\left|\lim_{\epsilon \to 0}\frac{ \tr\left| \rho^{T_{B}}_{t}+\epsilon \dot{\rho}^{T_{B}}_{t} + o(\epsilon^2)  \right| - \tr\left|\rho^{T_{B}}_{t}\right|}{\epsilon}\right|.
\end{align}      
Now, we leave the terms of $o(\epsilon^2)$ in the Taylor expansion of $\rho^{T^{B}}_{t+\epsilon}$ and use the triangular inequality $\tr\left| A+B \right|\leq\tr\left| A \right| + \tr\left| B \right| $ to further simplify above equation. We then obtain      
\begin{align}
  \left|\frac{{\rm d} }{{\rm d}t}\mathscr{N}(\rho_t)\right|    &\leq \frac{1}{2}\left|\lim_{\epsilon \to 0}\frac{\tr \left| \rho^{T_{B}}_{t}\right|+\epsilon\tr\left| \dot{\rho}^{T_{B}}_{t} \right| - \tr\left|\rho^{T_{B}}_{t}\right|}{\epsilon}\right|\nonumber\\
      &=\frac{1}{2}\tr\left({\left|\dot{\rho}^{T_{B}}_t\right|}\right)=\frac{1}{2}\norm{\dot{\rho}^{T_{B}}_t}_{1} \label{equ:23}.
\end{align}

The above inequality~\eqref{equ:23} is the upper bound on that the rate of change of the negativity of the quantum system evolving under given dynamics. After integrating the above equation with respect to time $t$, we obtain 

\begin{align}
  \int_{0}^{T} {\rm d}t\left|\frac{{\rm d} }{{\rm d}t}\mathscr{N}(\rho_t)\right|& \leq \frac{1}{2}\int_{0}^{T}{\rm d}t\norm{\dot{\rho}^{T_{B}}_t}_{1} \nonumber\\
  &= \frac{1}{2}\int_{0}^{T}{\rm d}t\norm{\cal{L}_t(\rho^{T_{B}})}_{1}.
\end{align}
From the above inequality, we get the desired bound:
\begin{equation}
T \geq   \frac{2\left| \mathscr{N}(\rho_{T})-\mathscr{N}(\rho_{0})\right|}{ \Lambda^{N}_{T}}.
\end{equation}
We also provide an alternative proof of Theorem~\ref{thm:2} (see Appendix~\ref{alterproof}).
\end{proof}

 The bound~\eqref{eq:theorem-1} holds for both the generation and degradation of entanglement due to quantum dynamical processes. The negativity is monotone under local operation and classical communication (LOCC). Therefore, whenever bipartite dynamical process from initial time to final time can be represented as an LOCC map, we have $\mathscr{N}(\rho_{T})\leq \mathscr{N}(\rho_{0})$; see also Remark 4.4 of Ref.~\cite{S.Das2018} in this context. It is a trivial observation that that $T_{\rm NSL}=0$ if and only if there is no difference between the negativity between initial and final states, where we assume that the dynamics is such that $\Lambda_T$ is finite. There are multiple scenarios under which no change in the negativity and hence $T_{\rm NSL}=0$ may occur. For instances, $(i)$ if the initial state is a fixed point of the dynamical process, $(ii)$ if the dynamical process is PPT-preserving map, i.e., processes that map PPT states to PPT states, $(iii)$ if we are choosing evolution duration of the state under dynamical process such that negativity at the initial and final time points are same.

\subsection{Speed limit on the concurrence}\label{sec:QSL-2}
The concurrence was first introduced for pure two-qubit states in Ref.~\cite{Wootters1998} and later a generalized version of concurrence called I-concurrence for pure two-qudit states was introduced in Ref.~\cite{Rungta2001}. We now discuss lower bounds on the minimal time for the certain amount of change in the concurrence and the I-concurrence for two-qubit and two-qudit systems, respectively, evolving under time-dependent Hamiltonians $H_t$. For a time-dependent Hamiltonian $H_t$, subscript $t$ is to denote time-point $t$.
\begin{theorem}\label{thm-t-independent}
Consider a closed two-qubit quantum system $AB$ which is in a pure state. The minimal time $T$ taken for the (closed) system to evolve for a certain amount of change in the square of its concurrence under unitary dynamics generated by a time-dependent Hamiltonian $H_{t}$ is lower bounded by 
\begin{equation}
T \geq  T_{\rm CSL}= \frac{\hbar}{4}\frac{\abs{ \mathscr{C}^2(\psi_T)-{ \mathscr{C}^2(\psi_0)}}}{\Lambda^{C}_{T}},
\end{equation}
where $\psi_t= \mathcal{T}\exp(\int_{0}^{t}\mathcal{L}_{t'}{\rm d}t')\psi_0$, with $\psi_0$ denoting the initial state (at $t=0$), $\psi_T$ denoting the final state (at $t=T$), $\mathcal{T}$ is time ordering operator, and $\Lambda^{C}_{T}=\frac{1}{T}\int_{0}^{T}\sqrt{\tr(\psi_{t}H_{t}^2)}{\rm d}t$. $\Lambda^C_T$ can be interpreted as the evolution speed of the square of the concurrence of the given system and process.
\end{theorem}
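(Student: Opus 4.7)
The plan is to follow the template of Theorem~\ref{thm:1}: express $\mathscr{C}^{2}(\psi_t)$ in a form well-suited to differentiation, bound $|\frac{d}{dt}\mathscr{C}^{2}(\psi_t)|$ pointwise by a quantity not containing the time-derivative, and then integrate over $[0,T]$.

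Because $\psi_t$ stays pure under the unitary flow, I would first rewrite $\mathscr{C}^{2}(\psi_t) = |c_t|^{2}$ with $c_t := \langle\psi_t|\tilde{\psi}_t\rangle$ and $|\tilde{\psi}_t\rangle := (\sigma_y\otimes\sigma_y)|\psi_t^{*}\rangle$ the spin-flipped ket. This identity follows by expanding $\mathscr{C}^{2}(\psi) = \tr(\psi\,\mathcal{R}(\psi^{*}))$ on the rank-one projector $\psi = |\psi\rangle\langle\psi|$.

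Next I would derive evolution equations for both kets. The Schr\"odinger equation $i\hbar\,\frac{d}{dt}|\psi_t\rangle = H_t|\psi_t\rangle$, after complex conjugation followed by conjugation by $\sigma_y\otimes\sigma_y$, yields $-i\hbar\,\frac{d}{dt}|\tilde{\psi}_t\rangle = \tilde{H}_t|\tilde{\psi}_t\rangle$ with the spin-flipped Hamiltonian $\tilde{H}_t := (\sigma_y\otimes\sigma_y)H_t^{*}(\sigma_y\otimes\sigma_y)$. Substituting these into the time-derivative of $c_t$ gives $\dot{c}_t = \frac{i}{\hbar}\langle\psi_t|(H_t+\tilde{H}_t)|\tilde{\psi}_t\rangle$. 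Applying the triangle inequality and Cauchy-Schwarz to each of the two matrix elements, and using the identity $\langle\tilde{\psi}_t|\tilde{H}_t^{2}|\tilde{\psi}_t\rangle = \langle\psi_t|H_t^{2}|\psi_t\rangle$ (which holds because $(\sigma_y\otimes\sigma_y)^{2} = \mathbbm{1}$ and $\langle\psi_t|H_t^{2}|\psi_t\rangle$ is a non-negative real number), both bounds collapse to the same quantity, giving $|\dot{c}_t| \leq \frac{2}{\hbar}\sqrt{\tr(\psi_t H_t^{2})}$.

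To finish, I would write $|\frac{d}{dt}\mathscr{C}^{2}(\psi_t)| = |2\,\mathrm{Re}(\bar{c}_t\dot{c}_t)| \leq 2|c_t||\dot{c}_t| = 2\mathscr{C}(\psi_t)|\dot{c}_t|$, use the elementary bound $\mathscr{C}(\psi_t) \leq 1$ to obtain $|\frac{d}{dt}\mathscr{C}^{2}(\psi_t)| \leq \frac{4}{\hbar}\sqrt{\tr(\psi_t H_t^{2})}$, and integrate over $[0,T]$, pulling the absolute value outside the integral as in the proof of Theorem~\ref{thm:1}. Rearranging yields $T \geq T_{\rm CSL}$. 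The main subtlety I anticipate is correctly tracking the sign flip that complex conjugation induces in the evolution equation for $|\tilde{\psi}_t\rangle$ and verifying that the $\sigma_y\otimes\sigma_y$-intertwining identity makes the two Cauchy-Schwarz bounds coincide; the remaining steps mirror the integration argument of Theorem~\ref{thm:1}. A secondary observation is that the factor $\mathscr{C}(\psi_t) \leq 1$ is the only slack in the estimate, so a sharper bound with $\mathscr{C}(\psi_t)$ retained inside the integrand is available at essentially no extra cost.
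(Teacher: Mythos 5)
Your proof is correct, and it reaches the paper's key pointwise estimate $\left|\frac{\rm d}{{\rm d}t}\mathscr{C}^{2}(\psi_t)\right| \leq \frac{4}{\hbar}\sqrt{\tr(\psi_t H_t^{2})}$ by a genuinely different decomposition. The paper (Appendix~\ref{proof_of_tmh_2}) stays at the operator level: it differentiates $\tr(\psi_t\mathcal{R}(\psi_t^{*}))$, substitutes the Liouville--von Neumann equation, splits the commutators by the triangle inequality into four trace terms, and applies the Hilbert--Schmidt Cauchy--Schwarz inequality $|\tr(AB)|\leq\sqrt{\tr(A^{\dagger}A)\tr(B^{\dagger}B)}$ to each, using purity to normalize $\tr(\mathcal{R}(\psi^{*})^{\dagger}\mathcal{R}(\psi^{*}))=1$. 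You instead work at the level of the overlap amplitude $c_t=\langle\psi_t|\tilde{\psi}_t\rangle$ with $\mathscr{C}(\psi_t)=|c_t|$, derive the spin-flipped Schr\"odinger equation for $|\tilde{\psi}_t\rangle$, and apply the vector Cauchy--Schwarz inequality to $\dot{c}_t=\frac{i}{\hbar}\langle\psi_t|(H_t+\tilde{H}_t)|\tilde{\psi}_t\rangle$; the intertwining identity $\langle\tilde{\psi}_t|\tilde{H}_t^{2}|\tilde{\psi}_t\rangle=\langle\psi_t|H_t^{2}|\psi_t\rangle$ you invoke is correct since the quantity is real. Both routes then integrate identically. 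What your route buys is the sharper intermediate inequality $\left|\frac{\rm d}{{\rm d}t}\mathscr{C}^{2}(\psi_t)\right|\leq\frac{4}{\hbar}\,\mathscr{C}(\psi_t)\sqrt{\tr(\psi_t H_t^{2})}$ --- equivalently a bound on $|\frac{\rm d}{{\rm d}t}\mathscr{C}(\psi_t)|$ itself --- which the paper's trace-level argument discards when it bounds each of the four terms separately; retaining the factor $\mathscr{C}(\psi_t)$ in the integrand would give a strictly tighter speed limit whenever the state is not maximally entangled throughout the evolution. The only caveat is that your final stated bound drops this factor via $\mathscr{C}(\psi_t)\leq 1$, so as written you recover exactly the theorem, no more and no less.
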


See Appendix \ref{proof_of_tmh_2} for the detailed proof of the above theorem. An immediate consequence of the above theorem is the following corollary.
\begin{corollary}\label{thm:2}
Consider a closed two-qubit quantum system $AB$ which is in a pure state. The minimal time $T$ taken for the (closed) system to evolve for a certain amount of change in the square of its concurrence under unitary dynamics generated by a time-independent Hamiltonian $H$ is lower bounded by 
\begin{equation}\label{eq:theorem-2}
T \geq  T_{\rm CSL}= \frac{\hbar}{4}\frac{\abs{ \mathscr{C}^2(\psi_T)-{ \mathscr{C}^2(\psi_0)}}}{\sqrt{\tr(\psi_{0}H^2)}},
\end{equation}
where $\psi_t= \exp(\frac{\iota Ht}{\hbar})\psi_0\exp(\frac{-\iota Ht}{\hbar})$, with $\psi_0$ denoting the initial state (at $t=0$) and $\psi_T$ denoting the final state (at $t=T$). 
\end{corollary}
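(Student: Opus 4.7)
The plan is to derive the corollary directly from Theorem~\ref{thm-t-independent} by specializing the time-dependent setup to a constant Hamiltonian. Since the statement of Theorem~\ref{thm-t-independent} has already been granted, the only work left is to evaluate the speed factor $\Lambda^C_T = \tfrac{1}{T}\int_0^T \sqrt{\tr(\psi_t H_t^2)}\,{\rm d}t$ in the time-independent regime, and to verify that it collapses to the denominator appearing in Eq.~\eqref{eq:theorem-2}.

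The key observation I would use is that for a time-independent Hamiltonian $H$, the operator $H^2$ is a conserved quantity under the induced unitary evolution, simply because $[H,H^2]=0$. Concretely, first I would write the evolved pure state as $\psi_t = e^{-\iota H t/\hbar}\,\psi_0\, e^{\iota H t/\hbar}$, which is the formal solution of the Schr\"{o}dinger-von Neumann equation when $\mathcal{L}_t(\cdot)=-(\iota/\hbar)[H,\cdot]$. Then, using cyclicity of the trace,
\begin{equation}
\tr(\psi_t H^2) = \tr\!\left(e^{-\iota H t/\hbar}\psi_0 e^{\iota H t/\hbar} H^2\right) = \tr\!\left(\psi_0\, e^{\iota H t/\hbar} H^2 e^{-\iota H t/\hbar}\right) = \tr(\psi_0 H^2),
\end{equation}
where in the last step I use that $H^2$ commutes with $e^{\pm \iota H t/\hbar}$.

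Substituting this constant value into the definition of $\Lambda^C_T$ from Theorem~\ref{thm-t-independent} gives
\begin{equation}
\Lambda^C_T = \frac{1}{T}\int_{0}^{T}\sqrt{\tr(\psi_{0}H^2)}\,{\rm d}t = \sqrt{\tr(\psi_{0}H^2)},
\end{equation}
and inserting this into the inequality of Theorem~\ref{thm-t-independent} yields exactly Eq.~\eqref{eq:theorem-2}. There is essentially no technical obstacle in this reduction since the conservation of $H^2$ under its own generated unitary is immediate; the only thing to be mildly careful about is to cite the fact that the formal exponential solution is valid for bounded $H$, and to note explicitly that the step replacing $\sqrt{\tr(\psi_t H^2)}$ by $\sqrt{\tr(\psi_0 H^2)}$ relies on the Hamiltonian being time-independent, since this equality fails in general for time-dependent $H_t$. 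The heavy lifting has been done in the proof of Theorem~\ref{thm-t-independent}; the corollary is purely a simplification of the evolution-speed quantity.
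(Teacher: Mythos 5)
Your reduction is correct and matches the paper's intent exactly: the paper presents this corollary as an ``immediate consequence'' of Theorem~\ref{thm-t-independent}, and the only content is precisely your observation that $\tr(\psi_t H^2)=\tr(\psi_0 H^2)$ by cyclicity of the trace and $[H,H^2]=0$, which collapses $\Lambda^C_T$ to $\sqrt{\tr(\psi_0 H^2)}$. No gaps.
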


We now derive speed limits on the I-concurrence.
\begin{proposition}\label{prop:1}
Consider a finite-dimensional bipartite quantum system $AB$ initially in a pure state. The minimal time $T$ taken for the closed system to bring certain amount of change in the square of its I-concurrence under unitary dynamics generated by a time-dependent Hamiltonian $H_t$ is lower bounded by
\begin{equation}\label{eq:prop-1}
T \geq  T_{\rm ICSL}= \frac{\abs{{ \mathscr{C}_{I}^2(\psi_T)}-{ \mathscr{C}_{I}^2(\psi_0)}}}{\Lambda_{T}^{I}},
\end{equation}
where $\psi_0$ is the initial state (at $t=0$), $\psi_T$ is the final state (at $T=0)$, $\Lambda_{T}^{I}:=4\nu_{d_{A}}\nu_{d_{B}}\frac{1}{T}\int_{0}^{T}\norm{\rho^{A}_t}_{\rm 2}\norm{\tr_{\rm B}\left(\cal{L}_t(\psi_{t})\right)}_{\rm 2}{\rm d}t$ for the given unitary dynamics $\mathcal{L}_t(\psi_t)=-\frac{\iota}{\hbar}[\psi_t,H_t]$, and $\rho^A_t:=\tr_{\rm B}\psi_t$.
\end{proposition}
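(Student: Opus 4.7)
The plan is to follow the same template as Theorems~\ref{thm:1} and \ref{thm:2}: differentiate the quantity of interest along the unitary flow, upper bound its instantaneous rate of change by an operator-norm inequality, and integrate. Because $\psi_t$ stays pure under unitary evolution, the formula $\mathscr{C}_I^2(\psi_t)=2\nu_{d_A}\nu_{d_B}[1-\tr((\rho^A_t)^2)]$ with $\rho^A_t=\tr_{\rm B}\psi_t$ remains valid at every $t\in[0,T]$, so I can work directly with this expression throughout.

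First I would compute the time derivative. Since the partial trace commutes with $d/dt$, the chain rule gives
\begin{equation}
\frac{d}{dt}\mathscr{C}_I^2(\psi_t)=-4\nu_{d_A}\nu_{d_B}\,\tr\!\left(\rho^A_t\,\tr_{\rm B}(\cal{L}_t(\psi_t))\right).
\end{equation}
Both factors inside the trace are Hermitian operators on $\cal{H}_A$, since $\cal{L}_t(\psi_t)=-\tfrac{\iota}{\hbar}[\psi_t,H_t]$ is Hermitian and the partial trace preserves Hermiticity. Applying the Cauchy--Schwarz inequality in the Hilbert--Schmidt inner product on $\cal{B}(\cal{H}_A)$ yields the pointwise bound
\begin{equation}
\left|\frac{d}{dt}\mathscr{C}_I^2(\psi_t)\right|\le 4\nu_{d_A}\nu_{d_B}\,\norm{\rho^A_t}_{\rm 2}\,\norm{\tr_{\rm B}(\cal{L}_t(\psi_t))}_{\rm 2}.
\end{equation}

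Next I would integrate this estimate over $[0,T]$, using $\bigl|\int_0^T f\,dt\bigr|\le\int_0^T |f|\,dt$ on the left-hand side, to obtain $|\mathscr{C}_I^2(\psi_T)-\mathscr{C}_I^2(\psi_0)|\le T\,\Lambda_T^I$, from which the claimed bound follows by division. The argument applies both to entanglement generation and to its degradation, as only the absolute value of the difference enters.

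I do not anticipate a serious obstacle: unlike the proof of Theorem~\ref{thm:1}, which required a triangle-inequality manipulation of $\tr|\rho_t^{T_B}|$ because the trace norm is not differentiable in the usual sense, here $\mathscr{C}_I^2$ is a polynomial (indeed quadratic) in $\rho^A_t$ so the derivative is explicit, and the only nontrivial step is the single application of Cauchy--Schwarz. One could equally invoke H\"older's inequality with a conjugate pair $(p,q)$ other than $(2,2)$ and obtain analogous bounds in terms of other Schatten norms; the symmetric choice $p=q=2$ is what matches the definition of $\Lambda_T^I$ given in the statement of the proposition.
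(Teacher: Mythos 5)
Your proposal is correct and follows essentially the same route as the paper's proof: differentiate the quadratic expression $\mathscr{C}_I^2(\psi_t)=2\nu_{d_A}\nu_{d_B}[1-\tr((\rho^A_t)^2)]$, bound the rate by a single application of the Cauchy--Schwarz inequality in the Hilbert--Schmidt inner product, and integrate over $[0,T]$. The additional remarks on Hermiticity and on the possibility of other H\"older pairs are sound but not needed.
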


\begin{proof}
 The square of I-concurrence of bipartite state $\rho_{t}$ is given as \begin{equation}
  \mathscr{C}_{I}^{2}(\psi_{t}) = 2\nu_{d_{A}}\nu_{d_{B}}[ 1-\tr((\rho^A_t)^2)],
 \end{equation}
 where $\rho^A_t:=\tr_{\rm B}\psi_t$. After differentiating above equation with respect to time $t$, we then obtain
 \begin{align}
     \frac{\rm d}{{\rm d}t} \mathscr{C}_{I}^{2}(\psi_{t})
     &= -4\nu_{d_{A}}\nu_{d_{B}}\tr({\rho_t^{A}\dot{\rho}_t^{A}})\nonumber\\
     &= -4\nu_{d_{A}}\nu_{d_{B}}\tr(\rho_t^{A}\tr_{\rm B}(\cal{L}_t(\psi_{t})).
 \end{align}
Now let us take absolute value of above equation and applying Cauchy--Schwarz inequality, we then obtain
 \begin{align}
     \left|\frac{\rm d}{{\rm d}t} \mathscr{C}_{I}^{2}(\psi_{t}) \right|\leq 4\nu_{d_{A}}\nu_{d_{B}}\norm{\rho_t^{A}}_{\rm 2}\norm{\tr_{\rm B}\left(\cal{L}_t(\psi_{t})\right)}_{\rm 2}.
 \end{align}

 The above inequality is the upper bound on that the rate of change of square of the concurrence of the quantum system evolving under given dynamics.
After integrating above equation with respect to time $t$, we obtain
\begin{equation}
    \int_{0}^{T} \left|\frac{\rm d}{{\rm d}t} \mathscr{C}_{I}^{2}(\psi_{t}) \right|{\rm d}t \leq  \nu_{AB} \int_{0}^{T}\norm{\rho^{A}_t}_{\rm 2}\norm{\tr_{\rm B}\left(\cal{L}_t(\psi_{t})\right)}_{\rm 2} {\rm d}t,
\end{equation}
where $\nu_{AB}:= 4\nu_{d_{A}}\nu_{d_{B}}$. From the above inequality, we get the bound 
\begin{equation}
    T\geq  \frac{\left|{ \mathscr{C}_{I}^2(\psi_T)}-{ \mathscr{C}_{I}^2(\psi_0)}\right|}{\Lambda_{T}^{I}}.
\end{equation}
\end{proof}

The QSL on concurrence and I-concurrence applies to both entanglement generation and degradation processes. In particular, our bounds~\eqref{eq:theorem-2} and~\eqref{eq:prop-1} can determine the minimal time required to prepare a pure bipartite entangled state from a pure product state via unitary dynamics. It is a trivial observation that under local (i.e., separable) unitary dynamics concurrence and I-concurrence are invariant, therefore under such dynamics both $T_{\rm CSL}$ and $T_{\rm ICSL}$ are zero. Note that for any two-qubit system in a pure state the I-concurrence reduces to the concurrence for $\nu_{d_{A}}=1$ and $\nu_{d_{B}}=1$ but $T_{\rm CSL}$ and $T_{\rm ICSL}$ may not be equal because the evolution speed for a given dynamics could be different for both the bounds. For example, in the case of time-independent Hamiltonian, the evolution speed of concurrence is independent of the time interval while the evolution speed of I-concurrence depends on the time interval.

\subsection{Speed limits on other correlations}\label{sec:QSL-3}
Quantum mechanics allow for correlations between systems that cannot be depicted by any classical systems. It is known that no local-realistic hidden variable theories can predict all the outcomes exhibited by quantum correlations~\cite{EPR1935,Bell1964,Brunner2014} (see also Refs.~\cite{Bancal2013,Home2015}). Entangled states with no local-realistic hidden variable models are deemed nonlocal states. Bell-CHSH observables are used to test nonlocality of bipartite quantum states~\cite{CHSH1969,Hensen2015}. Apart from quantum corrrelations like entanglement and nonlocality, there is also interest in quantifying total amount of correlations between two systems. A quantifier of total amount of correlations present in arbitrary bipartite quantum system is quantum mutual information. It captures both the classical and truly quantum correlations present in a bipartite quantum system. In this section, we provide speed limits on the Bell-CHSH observable and the quantum mutual information for certain classes of quantum dynamics and speed limits on the von-Neumann entropy for arbitrary dynamics.

{\it Speed limit on Bell-CHSH correlation.---}\label{thm:3} In the Heisenberg picture, it is the operator which changes in time while the density operator remains fixed. For any dynamics of two-qubit quantum system with initial state $\rho$, the minimum time needed for the Bell-CHSH observable $\cal{B}_{t}$ to attain expectation value $\langle\cal{B}_{T}\rangle_{\rho}$, starting with the initial expectation value $\langle\cal{B}_{0}\rangle_{\rho}$, is lower bounded by (see Appendix~\ref{QSL:Observable})
\begin{equation}\label{eq:theorem-3}
 T \geq  T_{\rm BQSL}= \frac{| \langle\cal{B}_{T}\rangle_{\rho}-\langle\cal{B}_{0}\rangle_{\rho}|}{\norm{\rho}_{1}}\max\left\{\frac{1}{\Lambda_{T}^{\infty}},\frac{1}{\Lambda_{T}^{1}},\frac{1}{\Lambda_{T}^{2}}\right\},
\end{equation}
where $\Lambda_{T}^{\alpha}:=\frac{1}{T}\int_{0}^{T}{\rm d}t||\cal{L}_{t}^{\dagger}(\cal{B}_{t})||_{\alpha}$ for $\alpha\in\{1,2,\infty\}$  is the evolution speed of Bell-CHSH observable of the given system, $\langle\cal{B}_{0}\rangle_{\rho}$ and  $\langle\cal{B}_{T}\rangle_{\rho}$ are expectation value of Bell-CHSH observable at $t=0$ and $t=T$, respectively. In Appendix~\ref{app:chsh}, we also derive speed limit on Bell-CHSH correlation for bipartite quantum dynamics describable as separable maps.

We note that the sharpest bound is the operator norm-based bound. However, determining the Hilbert-Schmidt norm is comparatively easier to compute for general quantum dynamics. 

{\it Speed limit on quantum mutual information.---}
Consider quantum systems $A$ and $B$ which are of arbitrary dimensions. We assume that the systems are initially uncorrelated before they interact. Starting from a product state $\omega^{AB}_0:=\rho^{A}_0\otimes\rho^{B}_0$, the minimal time $T$ required to bring a certain amount of change in the quantum mutual information of the system evolving under an arbitrary quantum dynamics with time-dependent Liouvillian $\mathcal{L}_t$ is lower bounded by
\begin{equation}\label{eq:theorem}
T \geq  T_{\rm MISL}= \frac{ I(A;B)_{\omega_T}}{ \Lambda^{M}_{T}},
\end{equation}
where $\omega_t$ denotes the state at time $t$ with $\omega_0$ being the initial state (at $t=0$) and $\omega_T$ being the final state (at $t=T$) and $\Lambda^{M}_{T}:=\frac{1}{T}\int_{0}^{T}{\rm d}t\norm{\mathcal{L}_{t}({\omega_t})}_{ 2} \norm{\ln\omega_t-\ln{\omega_0})}_{2}$ (see Appendix~\ref{MI} for proof). Here we are implicitly assuming dynamics for which $\operatorname{supp}(\omega_{t_2})\subseteq \operatorname{supp}(\omega_{t_1})$ for all valid time points $t_2\geq t_1$~(see Appendix \ref{operator_function}). $\Lambda_T$ can be interpreted as the evolution speed of the quantum mutual information of the given system and process. This bound~\eqref{eq:theorem} only holds for quantum mutual information generation process. The bound~\eqref{eq:theorem} provides the minimal time required to prepare a bipartite correlated state from a uncorrelated state.

{\it Speed limit on the entropy.---}  Consider a quantum system $A$, where $\dim(\cal{H}_A)\leq \infty$.  The minimal time $T$ taken for the quantum system to bring certain amount of change in the entropy under an arbitrary given quantum dynamical process with associated Liouvillian $\cal{L}_{t}$  is lower bounded by
\begin{equation}\label{EEbound}
  T\geq T_{{\rm ESL}} = \frac{\left|S(\rho_T) -S(\rho_0)\right|}{ {\Lambda}^S_{T}},
\end{equation}
where $\rho_{t}$ denotes the state at time $t$ with $\rho_{0}$ being the initial state (at $t=0$) and $\rho_{T}$ being the final state (at $t=T$) and $\Lambda^S_{T}:=\frac{1}{T}\int_{0}^{T}{\rm d}t \norm{\mathcal{L}_{t}({\rho_t})}_{2} \norm{\ln \rho_t}_{2}$ (see proof in Appendix~\ref{entropy}).

Several entanglement measures such as the entanglement of formation~\cite{Bennett1996,Wootters2001}, the distillable entanglement~\cite{Bennett1996,C.Bennett1996}, and the relative entropy of entanglement~\cite{Vedral1997,Vedral1998} reduce to the entanglement entropy in the case of a closed bipartite system~\cite{Osborne2002}. Thus, in this case, the above bound~\eqref{EEbound} also sheds light on speed limits on entanglement for bipartite systems in a pure state (cf.~\cite{Das2018,S.Das2018}). The above bound~\eqref{EEbound} is valid for both the entropy generation and degradation processes, and also tighter than that obtained in the Theorem 1 of Ref.~\cite{Mohan2022}.

\section{Numerical results for some practical examples}\label{sec:Examples}

In this section, we apply our speed limits on the negativity, concurrence, and Bell-CHSH observable for some classes of quantum dynamics of wide practical interest~\cite{Rivas2012,Lidar2019}. In particular, we consider an unitary dynamics and some non-unitary processes classified as pure dephasing process, depolarising process, and amplitude damping process.

First, we explore speed limits on the negativity~\eqref{eq:theorem-1} and the concurrence~\eqref{eq:theorem-2} for unitary dynamics generated by non-local Hamiltonian.

 {\it Unitary process}.--- Let us consider two-qubit systems $AB$ interacting via a non-local Hamiltonian $H_{AB}$. Any two-qubit general Hamiltonian can always be expressed as
\begin{align}
    H_{AB} & = \sum_{i = \{x,y,z\}} \alpha_{i}\sigma^A_{i}\otimes \mathbbm{1}^B + \sum_{j = \{x,y,z\}} \mathbbm{1}^A \otimes  \beta_{i}\sigma^B_{i} \nonumber \\
    & \hspace{0.5cm} + \sum_{i,j = \{x,y,z\}} \gamma_{i,j} \sigma^A_{i} \otimes \sigma^B_{j} ,
\end{align}
where $\vec{\alpha} \in \mathbb{R}^3$, $\vec{\beta} \in \mathbb{R}^3$, $\gamma$ is a $3\times3$ real matrix and $\sigma^{A}_{i}$ and $\sigma^{B}_{i}$ are Pauli operators acting on $A$ and $B$, respectively.

We can always  perform local unitary operations without changing the amount of entanglement present in the system. Here we are only interested in the entanglement dynamics which allows us to restrict the form of the Hamiltonian to those which can be written in the following form~\cite{Dur2001} 
\begin{equation}
    \widetilde{H}_{AB}^{\pm} =  \mu_{x}  \sigma^A_{x}\otimes\sigma^B_{x} \pm \mu_{y}  \sigma^A_{y}\otimes\sigma^B_{y} +\mu_{z}  \sigma^A_{z}\otimes\sigma^B_{z},
\end{equation}
where $\mu_{x}$, $\mu_{y}$ and $\mu_{z}$ are singular values of matrix $\gamma$ with ordering $\mu_{x} \ge \mu_{y} \ge \mu_{z} \ge 0$.  In unitary dynamics Eq.~\eqref{Master_equation_density_operator} reduces to Liouville-von Neumann equation: \begin{equation}
     \dot{\psi_t}= -\iota[H,\psi_{t}],
\end{equation}
where $H$ is the Hamiltonian of the system and we have taken $\hbar=1$. We take $\widetilde{H}_{AB}^{+}$ as system's Hamiltonian without loss of generality~\cite{Dur2001} and $\psi_0$ as initial state with $\ket{\psi_{0}} = \sqrt{p}\ket{00}+\sqrt{1-p}\ket{11}$ for $p\in[0,1]$. Note that $\psi_0$ for $p=1/2$ is a fixed point for the Hamiltonian $\widetilde{H}_{AB}^{+}$. The state $\rho_t$ of the system at point of time $t$ is given by 
\begin{align}
   \psi_t & = \frac{1}{2} \left(1+(2 p-1) \cos \left(2 \theta t\right)\right)\ket{00}\bra{00} \nonumber\\
   & \hspace{0.35cm}+ (\sqrt{p(1-p) }+\frac{\iota}{2}  (2 p-1) \sin \left(2 \theta t\right)) \ket{00}\bra{11} \nonumber\\
   & \hspace{0.35cm}+ (\sqrt{p(1-p)} + \frac{\iota}{2}  (1-2 p) \sin \left(2 \theta t\right)) \ket{11}\bra{00} \nonumber\\
   & \hspace{0.35cm}+ \frac{1}{2} \left(1+(1-2 p) \cos \left(2 \theta t\right)\right) \ket{11}\bra{11}, \label{equ:final_state_in_non-local_dynamics}
\end{align}
where $\theta:=\mu_{x} - \mu_{y}$. To estimate bounds on the negativity~\eqref{eq:theorem-1} and the concurrence~\eqref{eq:theorem-2}, we need the following quantities:

\begin{align}
    {\mathscr{C}}^2(\psi_{0}) &= 2 \left(\left|p (p-1) \right| -p(p-1)\right),\\
    {{\tr(\psi_{0}(\widetilde{H}_{AB}^{+})^2)}} &= {\theta^2+\mu _z^2},\\ 
      \norm{\cal{L}_t(\psi_{AB}^{T_{B}})}_{1} &= f(\theta,p)(| \sin(2 t \theta)|+ | \cos (2 t \theta )|),\\
     \mathscr{N}(\psi_{0}) &= \sqrt{p\left(1-p\right)},
\end{align}
 \begin{align}
     \mathscr{N}(\psi_t) &=\frac{\sqrt{-4 p^2-(1-2 p)^2 \cos (4 \theta  t)+4 p+1}}{2 \sqrt{2}}\label{equ:negativity_for_non-local_dynamics},\\
      {\mathscr{C}}^2(\psi_{t}) &= \frac{1}{2} \left(4 \left|p\left(p-1\right) \right| -(1-2 p)^2 \cos \left(4 \theta t\right)+1\right) \label{equ:concurrence_for_non-local_dynamics},
\end{align}

where $f(\theta,p) := 2\theta\left| 1-2 p\right|$.

\begin{figure}[ht]
    \centering
    \includegraphics[width=8cm]{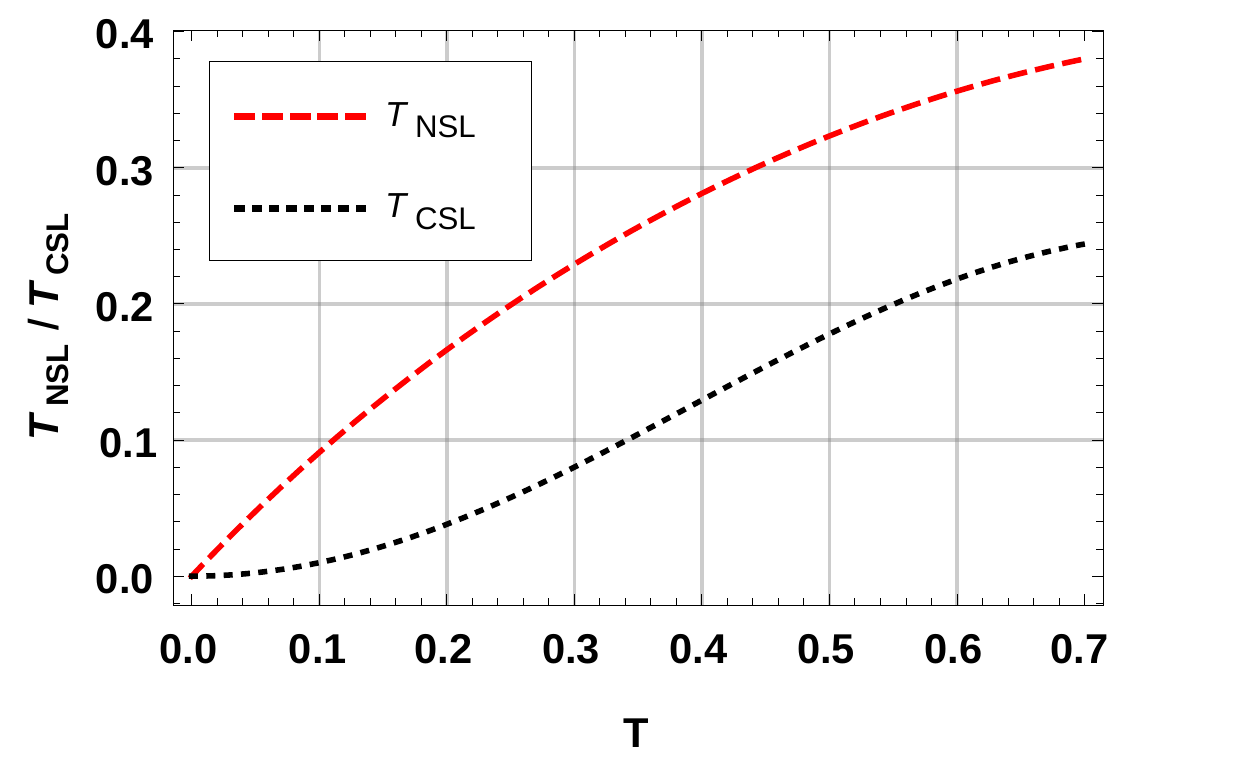}
    \caption{Here we depict $T_{\rm CSL}$ and $T_{\rm NSL}$ vs $T$ $\in$ $[0,0.7]$ for given unitary process and we have taken $\theta =1$, $p = 0$ and $\mu_{z} = 0.1$. If we plot $T_{\rm CSL}$ and $T_{\rm NSL}$ vs $T$ beyond the above mentioned range the speed limit time start decreasing due to decrease in value of negativity and concurrence. }
    \label{fig:speed_limit_for_non-local_hamiltonians}
\end{figure}

In Fig.~\ref{fig:speed_limit_for_non-local_hamiltonians}, we plot $T_{\rm NSL}$~\eqref{eq:theorem-1} and $T_{\rm CSL}$~\eqref{eq:theorem-2} vs $T$ $\in$ $[0,0.7]$ for unitary dynamics generated by two-qubit non-local Hamiltonian $\widetilde{H}^{+}_{AB}$ with $\theta=\mu_x-\mu_y=1$, $\mu_{z} = 0.1$, and initial state of the system to be $\psi_0$ with $p=0$. In Fig.~\ref{fig:speed_limit_for_non-local_hamiltonians}, we observe that under the given unitary process, the concurrence of the given system evolves faster than its negativity. It is clear from Eq.~\eqref{equ:final_state_in_non-local_dynamics} that density operator is function of $\theta$ (i.e. $\mu_{x}-\mu_{y}$) and so do the entanglement measures such as the negativity (Eq.~\eqref{equ:negativity_for_non-local_dynamics}) and square of the concurrence (Eq.~\eqref{equ:concurrence_for_non-local_dynamics}). The nature of both entanglement monotones is periodic in time and the period depends on $\theta$ (see Appendix~\ref{Negativity_and_Concurrence_For_Non-Local_Hamiltonian}). We found that the bounds~\eqref{eq:theorem-1} and~\eqref{eq:theorem-2} are relatively tighter for small values of $\theta$ (i.e., $\theta\in (0,1]$) in comparison to larger values of $\theta$ (i.e., $\theta>1$). See Fig.~\ref{fig:speed_limit_for_non-local_hamiltonians} and Appendix~\ref{Negativity_and_Concurrence_For_Non-Local_Hamiltonian}. Here, by relatively tighter we mean the gap between the evolution time and the time obtained from the speed limits.
 
{\it Open quantum dynamics.---} Before we analytically and numerically compute speed limits on the negativity~\eqref{eq:theorem-1} and Bell-CHSH observable~\eqref{eq:theorem-3} for bipartite quantum systems evolving under non-unitary quantum dynamics, let us briefly recall some concepts from open quantum systems. In case of open quantum systems we assume that dynamics of extended system (system + environment) is unitary and after tracing out the environment, we get the evolution equation for open quantum system. Under Markovian approximation, Eq.~\eqref{Master_equation_density_operator} and Eq.~\eqref{Master_equation_density_observable} reduce to the following~\cite{Lindblad1976,Gorini1975}:
 \begin{align}
       \dot{\rho_t}=-\iota[H,\rho_{t}]+\sum_{\alpha}\left(2 L_{\alpha}\rho_{t} L^{\dag}_{\alpha}-\{L^{\dag}_{\alpha}L_{\alpha},\rho_{t}\}\right),\\
        \dot{\cal{O}_{t}}=\iota[H,\cal{O}_{t}]+\sum_{\alpha}\left(2 L^{\dag}_{\alpha}\cal{O}_{t} L_{\alpha}-\{L^{\dag}_{\alpha}L_{\alpha},\cal{O}_{t}\}\right),
 \end{align}
where $\{O_1,O_2\}:= O_1O_2 + O_2O_1$ denotes anti-commutator bracket, $H$ is the Hamiltonian of the system, and $L_{\alpha}$'s (represent coupling between system and environment) are called Lindbladian operators or quantum jump operators. The above equations are called Lindblad-Gorini-Kossakowski-Sudarshan (LGKS) master equations.    

Here we consider two spin-$1/2$ particles $A$ and $B$ each coupled with environments $E_{A}$ and $E_{B}$, respectively, where $E_A$ and $E_B$ are not interacting with each other. We assume that the system $AB$ is initialised in a bipartite pure state  $\rho_0$ of the form
\begin{align}
     \rho_0 &=p \ket{00}\bra{00} + \sqrt{\left(1-p\right) p}(\ket{00}\bra{11}+\ket{11}\bra{00}) \nonumber \\
      & \hspace{0.35cm}+ \left(1-p\right) \ket{11}\bra{11},
\end{align}
 where $p\in[0,1]$.
 We consider Bell-CHSH observable $\cal{B}_{0} = \hat{a}.\vec{\sigma} \otimes \left(\hat{b}+\hat{b'}\right).\vec{\sigma} + \hat{a'}.\vec{\sigma} \otimes \left(\hat{b}-\hat{b'}\right).\vec{\sigma}$ with initial settings (at $t=0$) being $\hat{b} = \cos(\eta)\hat{z} + \sin(\eta)\hat{z}$, $\hat{a} = \hat{z} $, $\hat{a'} = \hat{x}$, $\hat{b'} = \cos(\eta)\hat{z} - \sin(\eta)\hat{x}$ and $\tan(\eta) = 2\sqrt{p\left(1-p\right)} $~\cite{Popescu1992}. We note that the settings for Bell-CHSH test need not remain optimal as the settings evolve during the dynamical process. 
 
{\it Pure dephasing process.---} We first consider pure dephasing channel as an example of quantum correlation degradation process. The Lindbladian operators for pure dephasing process are given as $L_{1} = \sqrt{\frac{\gamma^A}{2}} \sigma^A_{z}\otimes \mathbbm{1}_{B}$ and $L_{2} = \sqrt{\frac{\gamma^B}{2}}\mathbbm{1}_{A} \otimes\sigma^B_{z} $, where $\sigma^A_{z}$ and $\sigma^B_{z}$ are Pauli operators acting on system $A$ and $B$, respectively, and $\gamma^A,\gamma^B\in\mathbbm{R}$ denote the strength of dephasing. The LGKS master equation governs time-evolution of the state $\rho_{t}$ in Schrodinger picture and Bell-CHSH observable $\cal{B}_{t}$ in Heisenberg picture,
 \begin{align}\label{rho:dep}
  \frac{\rm d}{{\rm d}t}\rho_{t} &= \gamma^A\left(\sigma^A_{z}\otimes \mathbbm{1}_{B}\left(\rho_{t}\right)\sigma^A_{z}\otimes \mathbbm{1}_{B} -\rho^{AB}_{t}\right)\nonumber\\
  & \hspace{0.35cm}+\gamma^B\left( \mathbbm{1}_{A}\otimes\sigma^B_{z}\left(\rho^{AB}_{t}\right)\mathbbm{1}_{A}\otimes\sigma^B_{z} -\rho_{t}\right),\\
\label{bell:dep}
  \frac{\rm d}{{\rm d}t}\cal{B}_{t} &= \gamma^A\left(\sigma^A_{z}\otimes \mathbbm{1}_{B}\left(\cal{B}_{t}\right)\sigma^A_{z}\otimes \mathbbm{1}_{B} -\cal{B}_{t}\right)\nonumber\\
  & \hspace{0.35cm}+\gamma^B\left( \mathbbm{1}_{A}\otimes\sigma^B_{z}\left(\cal{B}_{t}\right)\mathbbm{1}_{A}\otimes\sigma^B_{z} -\cal{B}_{t}\right).
  \end{align}
The respective solutions to Eq.~\eqref{rho:dep} and Eq.~\eqref{bell:dep} are
  \begin{align}
      \rho_{t} &= p \ket{00}\bra{00} +\sqrt{p\left(1-p\right)} {\rm e}^{-4 \gamma  t}\left( \ket{00}\bra{11}+\ket{11}\bra{00}\right) \nonumber \\
      & \hspace{0.35cm}+ \left(1-p\right) \ket{11}\bra{11},\\
     \cal{B}_{t} & = 2 \cos (\eta )\left(\ket{00}\bra{00}-\ket{01}\bra{01}-\ket{10}\bra{10}+\ket{11}\bra{11}\right) \nonumber \\
     & \hspace{0.35cm} + 2 \sin (\eta ) {\rm e}^{-4 \gamma  t}\left(\ket{00}\bra{11}+\ket{10}\bra{01}+\ket{01}\bra{10}\right. \nonumber\\
     & \hspace{0.35cm}+\left.\ket{11}\bra{00}\right),
  \end{align}
  where we have assumed that dephasing rate of both environments are equal to $\gamma$. To estimate bounds on the negativity ~\eqref{eq:theorem-1} and Bell-CHSH observable ~\eqref{eq:theorem-3}, we need the following quantities:
\begin{align}
 \left| \mathscr{N}(\rho_{T})-\mathscr{N}(\rho_{0})\right| &= \sqrt{p - p^2}\left(1-{\rm e}^{-4 \gamma t}\right), \\
  \norm{\cal{L}_t(\rho_{t}^{T_{B}})}_{1} &= 8 \gamma \left(\sqrt{p-p^2}\right) {\rm e}^{-4 \gamma t},\\
 \left| \langle\cal{B}_{T}\rangle_{\rho_{0}}-\langle\cal{B}_{0}\rangle_{\rho_{0}}\right| &= 4 \sqrt{\left(p-p^2\right)}\left| \left(1-{\rm e}^{-4 \gamma t}\right)  \sin (\eta )\right|,\\
  \min\left\{\Lambda_{T}^{\infty},\Lambda_{T}^{1},\Lambda_{T}^{2}\right\}& = 8 \gamma  \sin (\eta ) {\rm e}^{-4 \gamma  t}.
  \end{align}
We can analytically verify that the speed limit~\eqref{eq:theorem-1} on the negativity is tight (we get $T=T_{\rm NSL}$) for arbitrary choice of parameter $\gamma$ and hence it is attainable for pure dephasing process.
  
\begin{figure}[h!]
    \centering
     \includegraphics[width=8cm]{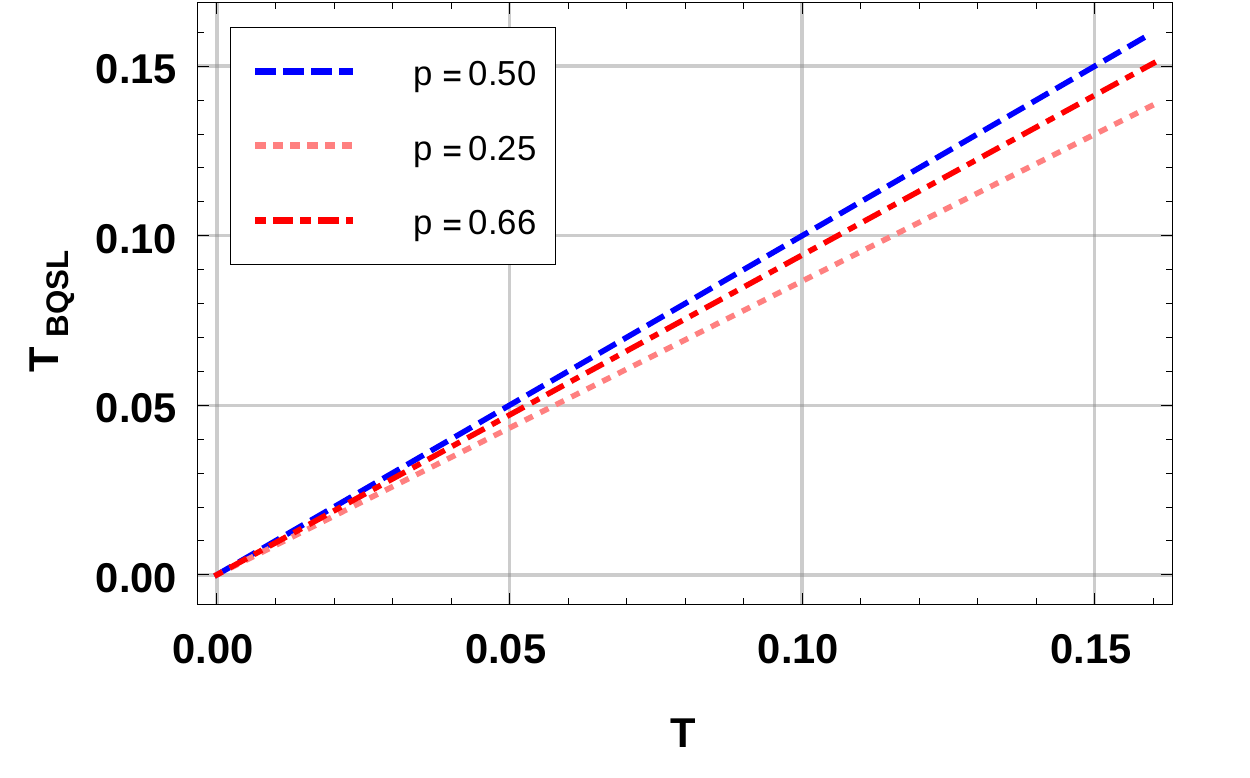}
    \caption{Here we depict $T_{\rm BQSL}$~\eqref{eq:theorem-3} vs $T$ bounds for pure dephasing process and we have taken $\gamma = 1$ and $p\in\{0.25,0.50,0.66\}$.}
  \label{fig:Pure_Dephasing}
\end{figure}

In Fig.~\ref{fig:Pure_Dephasing}, we plot $T_{\rm BQSL}$~\eqref{eq:theorem-3} vs $T$ $\in$ $[0,0.15]$ for pure dephasing process and we have considered $\gamma = 1$ and $p\in\{0.25,0.50,0.66\}$. We find that the the Bell-nonlocal correlation degrade faster for $p\in\{0.25,0.66\}$ (non-maximally entangled state) in comparison to $p=0.50$ (maximally entangled state). Furthermore, we found that the bound~\eqref{eq:theorem-3} tight and attainable for pure dephasing process when $\gamma = 1$ and $p=0.50$.

 {\it Depolarising process}.--- We next consider depolarising process as an example of quantum correlation degradation process. For depolarising process, the Lindbladian operators are given as $L^A_{i} = \sqrt{\frac{\gamma^A}{8}} \sigma^A_{i}\otimes \mathbbm{1}_{B}$ and $L^B_{j} = \sqrt{\frac{\gamma^B}{8}}\mathbbm{1}_{A} \otimes\sigma^B_{j} $ for $i,j\in\{1,2,3\} $, where $\sigma^A_{i}$ and $\sigma^B_{j}$ are Pauli operators acting on $A$ and $B$, respectively, and $\gamma^A,\gamma^B\in\mathbbm{R}$ denote the strength of depolarising. The time evolution of bipartite state  $\rho_{t}$ in Schr\"odinger's picture and Bell-CHSH observable $\cal{B}_{t}$ in Heisenberg's picture respectively are governed by LGKS master equation and given as 
 \begin{align}\label{rho:depo}
      \frac{\rm d}{{\rm d}t}\rho_{t} &= \frac{\gamma^A}{4} \sum_{i =1}^{3}\left(\sigma^A_{i}\otimes \mathbbm{1}_{B}\left(\rho_{t}\right)\sigma^A_{i}\otimes \mathbbm{1}_{B} -\rho_{t}\right)\nonumber\\
  & \hspace{0.35cm}+\frac{\gamma^B}{4}\sum_{i=1}^{3}\left( \mathbbm{1}_{A}\otimes\sigma^B_{i}\left(\rho_{t}\right)\mathbbm{1}_{A}\otimes\sigma^B_{i} -\rho_{t}\right),\\  \label{bell:depo}
      \frac{\rm d}{{\rm d}t}\cal{B}_{t} &= \frac{\gamma^A}{4} \sum_{i =1}^{3}\left(\sigma^A_{i}\otimes \mathbbm{1}_{B}\left(\cal{B}_{t}\right)\sigma^A_{i}\otimes \mathbbm{1}_{B} -\cal{B}_{t}\right)\nonumber\\
  & \hspace{0.35cm}+\frac{\gamma^B}{4}\sum_{i=1}^{3}\left( \mathbbm{1}_{A}\otimes\sigma^B_{i}\left(\cal{B}_{t}\right)\mathbbm{1}_{A}\otimes\sigma^B_{i}- \cal{B}_{t}\right).
 \end{align}
 The respective solutions to Eq.~\eqref{rho:depo} and Eq.~\eqref{bell:depo} are
  \begin{align}
      \rho_{t} &= \frac{1}{2} {\rm e}^{-\gamma t}\left(2 p+\cosh (\gamma  t)-1\right)\ket{00}\bra{00}\nonumber \\
      &\hspace{0.35cm}+ \sqrt{p\left(1-p\right)} {\rm e}^{-2 \gamma  t} \left(\ket{00}\bra{11}+\ket{11}\bra{00}\right)\nonumber \\
      &\hspace{0.35cm}+\frac{1}{2} {\rm e}^{-\gamma t} \sinh (\gamma  t)\left(\ket{01}\bra{01}+\ket{10}\bra{10}\right)\nonumber \\
      & \hspace{0.35cm}+ \frac{1}{2} {\rm e}^{-\gamma t} \left(1-2 p+\cosh (\gamma  t)\right) \ket{11}\bra{11},\\
      \cal{B}_{t} & = q\left(\ket{00}\bra{00}+\ket{11}\bra{11}\right)+h\left(\ket{01}\bra{01}+\ket{10}\bra{10}\right) \nonumber \\
     & \hspace{0.35cm} + 2 {\rm e}^{-2 \gamma  t}\left(\ket{00}\bra{11}+\ket{10}\bra{01}+\ket{01}\bra{10}\right. \nonumber\\
     & \hspace{0.35cm}+\left.\ket{11}\bra{00}\right),
  \end{align}
  where we assumed $\gamma^{A}=\gamma^{B}=\gamma$, $q=\frac{1}{2} {\rm e}^{-\gamma t} \left(4 \cos (\eta ) \cosh (\gamma  t)-4 \cos (\eta ) \sinh (\gamma  t)\right)$, and $h = \frac{1}{2} {\rm e}^{-\gamma t} (4 \cos (\eta ) \sinh (\gamma  t)-2 \cos (\eta ) (\cosh (\gamma  t)-1)-2 \cos (\eta ) (\cosh (\gamma  t)+1))$. To estimate bounds on the negativity~\eqref{eq:theorem-1} and Bell-CHSH observable~\eqref{eq:theorem-3}, we need the following quantities:
\begin{align}
  \left| \mathscr{N}(\rho_{T})-\mathscr{N}(\rho_{0})\right| &= \frac{1}{4} \left(1+4 \sqrt{(1-p) p}\right) {\rm e}^{-2 \gamma  t} \left({\rm e}^{2 \gamma  t}-1\right) \label{equ:negativity_difference_for_depolarising_process}, \\
 \norm{\cal{L}_t(\rho_t^{T_{B}})}_{1} &= \frac{{\rm e}^{-\frac{\gamma t}{2}}}{2}\left( {\rm e}^{\frac{\gamma t}{2}}\left(\sqrt{a'-b'}+ \sqrt{a'+b'}\hspace{0.1cm}\right)\right.\nonumber \\
 & \hspace{0.35cm}+\left.\sqrt{\gamma}\left(\sqrt{a}+\sqrt{b}\right) \right),\\
   | \langle\cal{B}_{T}\rangle_{\rho_{0}}-\langle\cal{B}_{0}\rangle_{\rho_{0}}| &= 2 \left(1-{\rm e}^{-2 \gamma  t}\right)\left|g(\eta,p) \right|,\\
   \min\left\{\Lambda_{T}^{\infty},\Lambda_{T}^{1},\Lambda_{T}^{2}\right\}& = 2 \gamma  {\rm e}^{-2 \gamma  t} \bigg(\sqrt{2-2 \sqrt{\cos ^2(2 \eta )}}\nonumber \\
  &\hspace{0.35cm}+\sqrt{1-\sin (2 \eta )}+\sqrt{\sin (2 \eta )+1}\bigg),
\end{align}
where 
$g(\eta,p)= \cos (\eta )+2 \sin (\eta ) \sqrt{p\left(1-p\right)}$, $a' = \gamma ^2 {\rm e}^{-4 \gamma  t}\left (1-16p \left(p-1\right) \right)$ , $ a = \gamma {\rm e}^{-3 \gamma  t} \left(\left(1-2 p\right) {\rm e}^{\gamma  t}-1\right)^2 $, $b' =8\sqrt{ \gamma ^4{\rm e}^{8 \gamma  t}p\left (1-p\right)} $, and $b = 8\sqrt{\gamma {\rm e}^{-3 \gamma  t} \left(1+(1-2 p) {\rm e}^{\gamma  t}\right)^2}$. The above expression for $ \min\left\{\Lambda_{T}^{\infty},\Lambda_{T}^{1},\Lambda_{T}^{2}\right\}$ is only valid in the range of $p \in[0.05,0.95]$ and Eq.~\eqref{equ:negativity_difference_for_depolarising_process} is only valid in range $0\leq \gamma \leq 1$.
\begin{figure}
     \centering
     \begin{subfigure}[b]{0.41\textwidth}
         \centering
         \includegraphics[width=\textwidth]{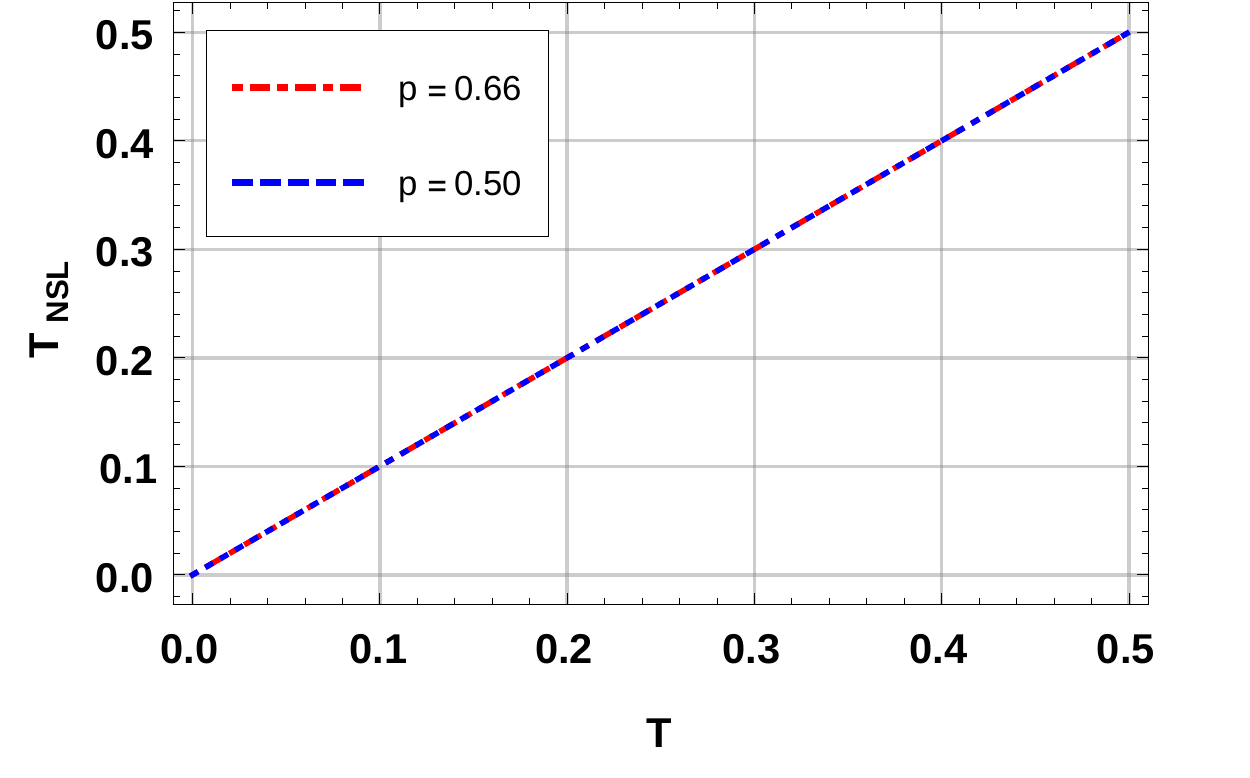}
         \caption{$T_{\rm NSL}$ vs $T$ with initial state parameter $p\in\{0.50,0.66\}$.}
         \label{fig:Negativity_for_depolarising_process}
     \end{subfigure}
     \hfill
     \begin{subfigure}[b]{0.41\textwidth}
         \centering
         \includegraphics[width=\textwidth]{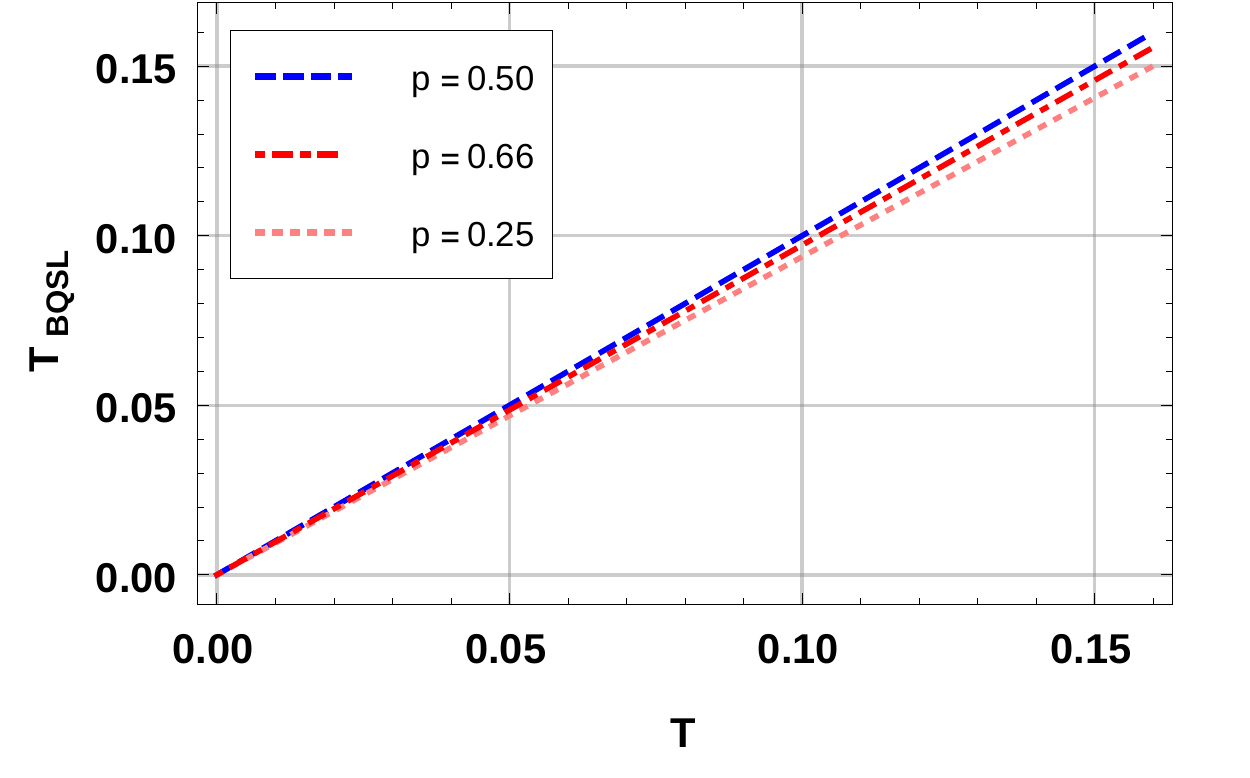}
         \caption{$T_{\rm BQSL}$ vs $T$ with initial state parameter $p\in\{0.25,0.50,0.66\}$.}
         \label{fig:Bell_Operator_Speed_Limit_for_depolarising_process}
     \end{subfigure}
 \caption{For depolarising process with $\gamma = 1$, we depict $T_{\rm NSL}$~\eqref{eq:theorem-1} vs $T$ in (a) and $T_{\rm BQSL}$~\eqref{eq:theorem-3} vs $T$ in (b).}
 
    \label{fig:Depolarisation}
\end{figure}

In Fig.~\ref{fig:Depolarisation}, we plot $T_{\rm NSL}$~\eqref{eq:theorem-1} vs $T$ and $T_{\rm BQSL}$~\eqref{eq:theorem-3} vs $T$ for depolarising process and we have considered $\gamma = 1$. For $T_{\rm BQSL}$ we have taken $p\in\{0.25,0.50,0.66\}$ and for $T_{\rm NSL}$ we have taken $p\in\{0.50,0.66\}$. We observe that the Bell-nonlocal correlation degrades faster for $p\in\{0.25,0.66\}$ (non-maximally entangled state) in comparison to $p=0.50$ (maximally entangled state). We note that the bound~\eqref{eq:theorem-3} is tight and attainable for the given depolarising process when $\gamma = 1$ and $p=0.50$. Also, the bound~\eqref{eq:theorem-1} is tight and attainable for the given depolarising process when $\gamma=1$ and $p=\{0.50,0.66\}$.
We also have estimated the speed limits on entanglement and Bell-CHSH observable. See Appendix \ref{fig:Amplitude-Damping} for detailed calculation.

\section{Conclusion}

 An important aspect of developing quantum devices is linked to the understanding of the rates at which quantum resources are consumed or generated~\cite{Aifer2022,S.Das2018,S_Das2021}. The rate at which correlations change for any given quantum process is critical for designing the architecture of quantum processors as well as to control the dynamical evolution of quantum systems for desired information processing or computation tasks~\cite{Ladd2010, Dowling2003,Jones2012}. In this work, we introduced speed limits on some of the entanglement monotones like negativity and concurrence and also some other correlations like quantum mutual information and Bell-CHSH observable. These speed limits provide lower bound on the minimal time required to bring certain amount of change in these correlations for bipartite quantum systems undergoing time-evolution. We have also discussed a few cases of practical processes for which some of these speed limits are attainable and hence can be considered to be tight. As a byproduct of the scenarios we consider in this paper, we are also able to improve upon the speed limit on the von Neumann entropy derived in Ref.~\cite{Mohan2022}. It is important to note that standard speed limits answer the fundamental question of how fast quantum systems can evolve over time, while our work provides limits on the rates at which we can create or destroy correlations through physical processes or how fast quantum devices can consume these correlations to perform a given task (as correlations can be valuable resources in quantum information and computing tasks. As a future direction, we leave open questions to derive (tight) speed limits on correlational measures beyond those discussed in this paper, for an instance, multipartite nonlocal quantum correlations~\cite{Bancal2013,Home2015,Horodecki2021}.

\begin{acknowledgements}
VP, DS, and BM thank Sohail, Chirag Srivastava, Abhay Srivastav, and Ujjwal Sen for useful discussions. BM acknowledges the support of the INFOSYS scholarship. AKP acknowledges support of the J. C. Bose Fellowship from the Department of Science and Technology (DST) India under Grant No.~JCB/2018/000038 (2019–2024).
\end{acknowledgements}

\appendix

\section{Partial Transpose operation}\label{pattial_transpose}
For an arbitrary operator $X_{AB}$ defined on $\cal{H}_{AB}$  and a fixed orthonormal basis $\{\ket{\alpha}\}_{A}$, the partial transpose $\Gamma^{A}$  is defined as the following linear map
\begin{equation}
    X_{AB}^{\Gamma_A}:= \Gamma^{A}(X_{AB})= \sum_{\alpha,\beta}(\ket{\alpha}\bra{\beta}\otimes\mathbbm{1}_{B})X_{AB}(\ket{\alpha}\bra{\beta}\otimes\mathbbm{1}_{B}),
\end{equation}
where $\Gamma^{A}$ is tensor product of two maps $T^{A}$ (transpose operator on $\cal{H}_{A}$) and $\mathbbm{1}_{B}$. Map $\Gamma^{B}$ can also be defined in similar way. 

A bipartite quantum state $\rho_{AB}$ is called separable if it can be written in the following form:
\begin{equation}
    \rho_{AB} = \sum_{i}p_{i}\op{\psi_{i}}_{A}\otimes\op{\phi_{i}}_{B},
\end{equation}
for some probability distribution $\{p_{i}\}_i$ and sets $\{\ket{\psi_{i}}_{A}\}_i$ and $\{\ket{\phi_{i}}_{B}\}_i$ of pure states. Let $\operatorname{SEP}(A;B)$ denotes the set of separable states defined on $\mathcal{H}_{AB}$. Under the action of partial transpose map, set of separable states remains a closed set i.e., $\Gamma^{A}(\rho_{AB})\in \operatorname{SEP}(A;B), \forall \rho_{AB}\in \operatorname{SEP}(A;B)$. If the action of partial transposition map on a state yields operator which is positive semidefinite, then the state is said to be positive partial transposition (PPT)~(see ). That is, $\rho_{AB}\in\cal{D}(\cal{H}_{AB})$ is a PPT state if $\Gamma^{B}(\rho_{AB})\geq0$, which also implies that $\Gamma^{A}(\rho_{AB})\geq 0$. If a quantum state $\rho_{AB}\in\cal{D}(\cal{H}_{AB})$ corresponds to a separable bipartite state then $\Gamma^{B}(\rho_{AB})\geq0$ but the converse statement is true if and only if both the quantum systems $A$ and $B$ are either qubit or one is qubit and the other is qutrit~\cite{Horodecki1996,Peres1996}. {For a $1\oplus 1$-mode Gaussian state $\rho_{AB}$, the positive partial transposition is necessary and sufficient criterion for  separability~\cite{Simon2000}}. If $\rho_{AB}$ is not separable then it is entangled. There are some special type of entangled states called ``maximally entangled" states. If $\rho_{AB}$ is maximally entangled and we perform any local measurement on subsystem $A$ or $B$, then we gain no information about the preparation of the state; instead we merely generate a random bit.

\section{Operator Functions} \label{operator_function}
Let $\cal{O}$ be a normal operator acting on a Hilbert space $\cal{H}$, i.e., $\cal{O}^\dag\cal{O}=\cal{O}\cal{O}^{\dag}$. The kernel of $\cal{O}$
is the span of the eigenvectors of  $\cal{O}$ corresponding to its zero eigenvalues. The  support $\operatorname{supp(\cal{O})}$ of $\cal{O}$ is the subspace of $\cal{H}$ orthogonal to its kernel. Let $ \{\alpha_{i}\}_i$ be the set of eigenvalues of $\cal{O}$ and $\{\ket{\alpha_{i}}\}_i$ be the corresponding set of eigenvectors of $\cal{O}$. Then $\cal{O}$ can be written as follows:
\begin{equation}
    \cal{O} =\sum_{i} \alpha_{i} \op{\alpha_{i}},
\end{equation}
which is called a spectral decomposition of $\cal{O}$. The projection onto the support $\operatorname{supp}(\cal{O})$ of $\cal{O}$ is denoted by
\begin{equation}
    \Pi_{\cal{O}} = \sum_{i:\alpha_{i}\neq 0}\op{\alpha_{i}}.
\end{equation}
If $f$ is a real valued function with domain ${\rm Dom}(f)$, then $f(\cal{O})$ is defined as 
\begin{equation}
    f(\cal{O}) = \sum_{i:\alpha_{i}\in {\rm Dom}(f)} f(\alpha_{i}) \op{\alpha_{i}}.
\end{equation}
\section{ Speed limit for observable}\label{QSL:Observable}
 The expectation value of the observable $\cal{O}_t$ is given as 
\begin{equation}
 \langle\cal{O}_{t}\rangle_{\rho} ={\rm tr}(\cal{O}_{0}\Phi_{t}(\rho)) = {\rm tr}(\Phi^{\dagger}_{t}(\cal{O}_{0})\rho),
\end{equation}
where $\Phi_{t}$ is generator of dynamics, $\cal{O}_{t}= \Phi_{t}(\cal{O}_{0})$ and $\rho$ is the given state of bipartite  quantum system. The time evolution of the observable $\cal{O}$ is given by the following equation,
\begin{equation}\label{Lind}
 \frac{{\rm d} }{{\rm d}t}{ \cal{O}_{t}} =   {\mathcal{L}^{\dagger}}[\cal{O}_{t}].
\end{equation}
Let us take the average of Eq~\eqref{Lind} in the bipartite state $\rho$ and its absolute value. By applying the  H\"older's inequality inequality, we obtain the following inequality 
\begin{align}\label{rate:4}
\left|\frac{{\rm d} }{{\rm d}t}{ \langle \cal{O}_{t} \rangle_{\rho}}\right| \leq \norm{\rho}_{1}\norm{{\mathcal{L}^{\dagger}}[\cal{O}_{t}]}_{\rm \infty} \leq \norm{\rho}_{1}\norm{{\mathcal{L}^{\dagger}}[\cal{O}_{t}]}_{\rm 2}\nonumber \\ \leq \norm{\rho}_{1}\norm{{\mathcal{L}^{\dagger}}[\cal{O}_{t}]}_{\rm 1}.
\end{align}
The above inequality~\eqref{rate:4} is the upper bound on that the rate of change of expectation value of the observable evolving under given dynamics. After integrating the above equation with respect to time $t$, we then obtain the following unified bound
\begin{equation}\label{Bound:open1}
 T \geq  T_{\rm OQSL}= \frac{| \langle\cal{O}_{T}\rangle_{\rho}-\langle\cal{O}_{0}\rangle_{\rho}|}{\norm{\rho}_{1}}\max\left\{\frac{1}{\Lambda_{T}^{\infty}},\frac{1}{\Lambda_{T}^{1}},\frac{1}{\Lambda_{T}^{2}}\right\},
\end{equation}
where $\Lambda_{T}^{\alpha}=\frac{1}{T}\int_{0}^{T}{\rm d}t\norm{\cal{L}_{t}^{\dagger}(\cal{O}_{t})}_{\alpha}$ for $\alpha\in\{1,2,\infty\}$ is the evolution speed of the observable of the given system. For the given dynamics, the bound given in \eqref{Bound:open1} determines how fast the expectation value of the observable changes in time (cf.~\cite{B.Mohan2021}).

\section{Hermiticity after partial transposition}\label{Herm}
Consider a bipartite system $AB$ with Hilbert space $\cal{H}_{AB}$ and density matrix $\rho_{AB}$. If $\{\ket{\alpha_{i}}_{A}\}_{i}$ are the orthonormal basis of $\cal{H}_{A}$ and $\{\ket{\beta_{i}}_{B}\}_{i}$ are the orthonormal basis of $\cal{H}_{B}$  then the density matrix of $AB$ takes the general form

\begin{align}
     \rho_{AB}&=\sum_{\{\alpha,\alpha',\beta,\beta'\}}a^{\alpha,\alpha'}_{\beta,\beta'}\ket{\alpha}\bra{\alpha'}_A\otimes\ket{\beta}\bra{\beta'}_B,\\
     \rho^\dag_{AB}&=\sum_{\{\alpha,\alpha',\beta,\beta'\}}\left(a^{\alpha,\alpha'}_{\beta,\beta'}\right)^*\ket{\alpha'}\bra{\alpha}_{A}\otimes\ket{\beta'}\bra{\beta}_{B}\nonumber\\
     &=\sum_{\{\alpha,\alpha',\beta,\beta'\}}\left(a^{\alpha',\alpha}_{\beta',\beta}\right)^*\ket{\alpha}\bra{\alpha'}_A\otimes\ket{\beta}\bra{\beta'}_B.
\end{align}
 The density operator $\rho_{AB}$ is a hermitian i.e. $\rho^\dag_{AB}=\rho_{AB}$. Now, comparing coefficients of $\rho_{AB}$ and  $\rho^\dag_{AB}$, we  find the following condition on the coefficient $a^{\alpha,\alpha'}_{\beta,\beta'}$
 \begin{equation}
     a^{\alpha,\alpha'}_{\beta,\beta'}=\left(a^{\alpha',\alpha}_{\beta',\beta}\right)^*\label{equ:coefficient_comparision}.
 \end{equation}
The partial transpose $\rho^{AB}$ can be written as
\begin{align}
    \rho^{T_{B}}_{AB}&=\sum_{\{\alpha,\alpha',\beta,\beta'\}}a^{\alpha,\alpha'}_{\beta,\beta'}\ket{\alpha}\bra{\alpha'}_A\otimes\ket{\beta'}\bra{\beta}_B, \\
(\rho^{T_{B}}_{AB})^{\dag}&=\sum_{\{\alpha,\alpha',\beta,\beta'\}}\left(a^{\alpha,\alpha'}_{\beta,\beta'}\right)^*\ket{\alpha'}\bra{\alpha}_{A}\otimes\ket{\beta}\bra{\beta'}_{B}\nonumber\\
&=\sum_{\{\alpha,\alpha',\beta,\beta'\}}\left(a^{\alpha',\alpha}_{\beta',\beta}\right)^*\ket{\alpha}\bra{\alpha'}_{A}\otimes\ket{\beta'}\bra{\beta}_{B}.
\end{align}

Using~\eqref{equ:coefficient_comparision} it can be shown that partial transpose of a density operator is hermitian.

\section{Alternative proof of Theorem \ref{thm:2}}\label{alterproof}
\begin{proof}
 The negativity of time evolved bipartite state $\rho_{t}$ given by
\begin{equation}
\mathscr{N}(\rho_t)  = \frac{\norm{\rho^{T_{B}}_t}_{1}-1}{2}.
\end{equation}
After differentiating the above equation with respect to time $t$, we obtain
\begin{align}
 \frac{{\rm d} }{{\rm d}t}\cal{N}(\rho_t) = 
 \frac{1}{4}\tr\left({\left|\rho^{T_{B}}_t\right|^{-1}\left((\dot{\rho}^{T_{B}}_t)^{\dagger}\rho^{T_{B}}_t+(\rho^{T_{B}}_t)^{\dagger}\dot{\rho}^{T_{B}}_t\right) }\right).
 \end{align}
Let us now consider the absolute value of the above equation and apply the property $|\tr(A)|\leq \tr|A|$. We then obtain the following inequality
\begin{equation}
   \left|\frac{{\rm d} }{{\rm d}t}\mathscr{N}(\rho_t)\right| \leq \frac{1}{4}\tr\left(\left|\rho^{T_{B}}_t\right|^{-1}{\left|(\dot{\rho}^{T_{B}}_t)^{\dagger}\rho^{T_{B}}_t+(\rho^{T_{B}}_t)^{\dagger}\dot{\rho}^{T_{B}}_t\right|}\right),
\end{equation}
where $({\rho}^{T_{B}}_t)^{\dagger}={\rho}^{T_{B}}_t$ (see Appendix \ref{Herm}). Let us use the triangular inequality for further simplification. We then obtain
\begin{equation}\label{eq:ent-bound}
   \left|\frac{{\rm d} }{{\rm d}t}\mathscr{N}(\rho_t)\right| \leq \frac{1}{2}\tr\left({\left|\dot{\rho}^{T_{B}}_t\right|}\right) = \frac{1}{2}\norm{\dot{\rho}^{T_{B}}_t}_{1}.
\end{equation}
The above inequality~\eqref{eq:ent-bound} is the upper bound on that the rate of change of the negativity of the quantum system evolving under given dynamics. After integrating the above equation with respect to time $t$, we obtain 
\begin{equation}
  \int_{0}^{T} {\rm d}t\left|\frac{{\rm d} }{{\rm d}t}\mathscr{N}(\rho_t)\right| \leq \frac{1}{2}\int_{0}^{T}{\rm d}t\norm{\dot{\rho}^{T_{B}}_t}_{1} = \frac{1}{2}\int_{0}^{T}{\rm d}t\norm{\cal{L}_t(\rho^{T_{B}})}_{1}.
\end{equation}
From the above inequality, we get the desired bound:
\begin{equation}
T \geq   \frac{2\left| \mathscr{N}(\rho_{T})-\mathscr{N}(\rho_{0})\right|}{ \Lambda^{N}_{T}}.
\end{equation}
\end{proof}

\section{$T_{\rm NSL}$ and $T_{\rm CSL}$ for nonlocal Hamiltonian}\label{Negativity_and_Concurrence_For_Non-Local_Hamiltonian}
Using Eq.~\eqref{equ:negativity_for_non-local_dynamics} and Eq~\eqref{equ:concurrence_for_non-local_dynamics} we can check that the negativity and concurrence are periodic functions in time for given unitary dynamics with nonlocal Hamiltonian $\widetilde{H}^+_{AB}$ and period can be altered by changing the value of $\theta=\mu_{x}-\mu_{y}$. It is clear from Fig.~\ref{fig:theta_equals_0.5} and Fig.~\ref{fig:theta_equals_1} that the time period of negativity and square of concurrence is higher for small values of $\theta$. From Fig.~\ref{fig:speed_linit_plot_for_theta_equal_0,5} and Fig.~\ref{fig:speed_linit_plot_for_theta_equal_1} it is observed that our corresponding speed limits are relatively tighter for small values of $\theta$.  
\begin{figure}
     \centering
     \begin{subfigure}[b]{0.37\textwidth}
         \centering
         \includegraphics[width=\textwidth]{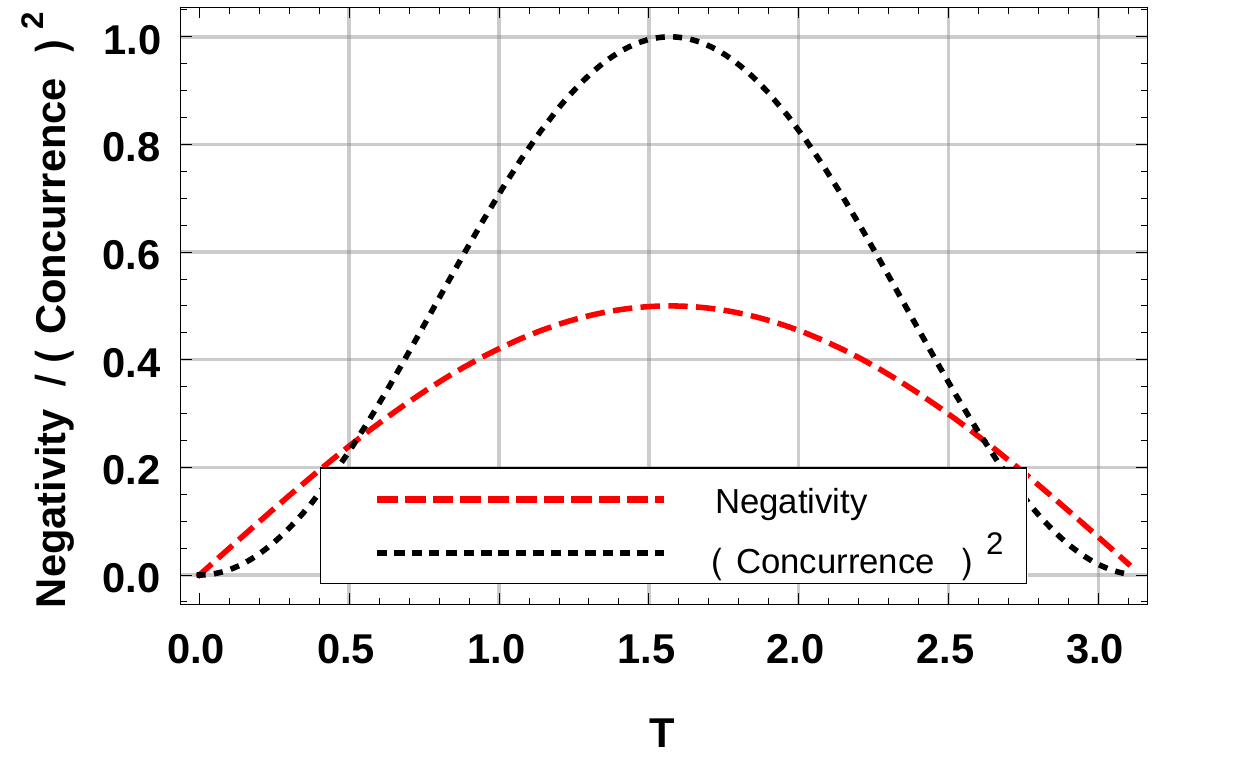}
         \caption{negativity and square of concurrence vs $T$ with $\theta=0.5$ and $p=0$.}
         \label{fig:theta_equals_0.5}
     \end{subfigure}
     \hfill
     \begin{subfigure}[b]{0.40\textwidth}
         \centering
         \includegraphics[width=\textwidth]{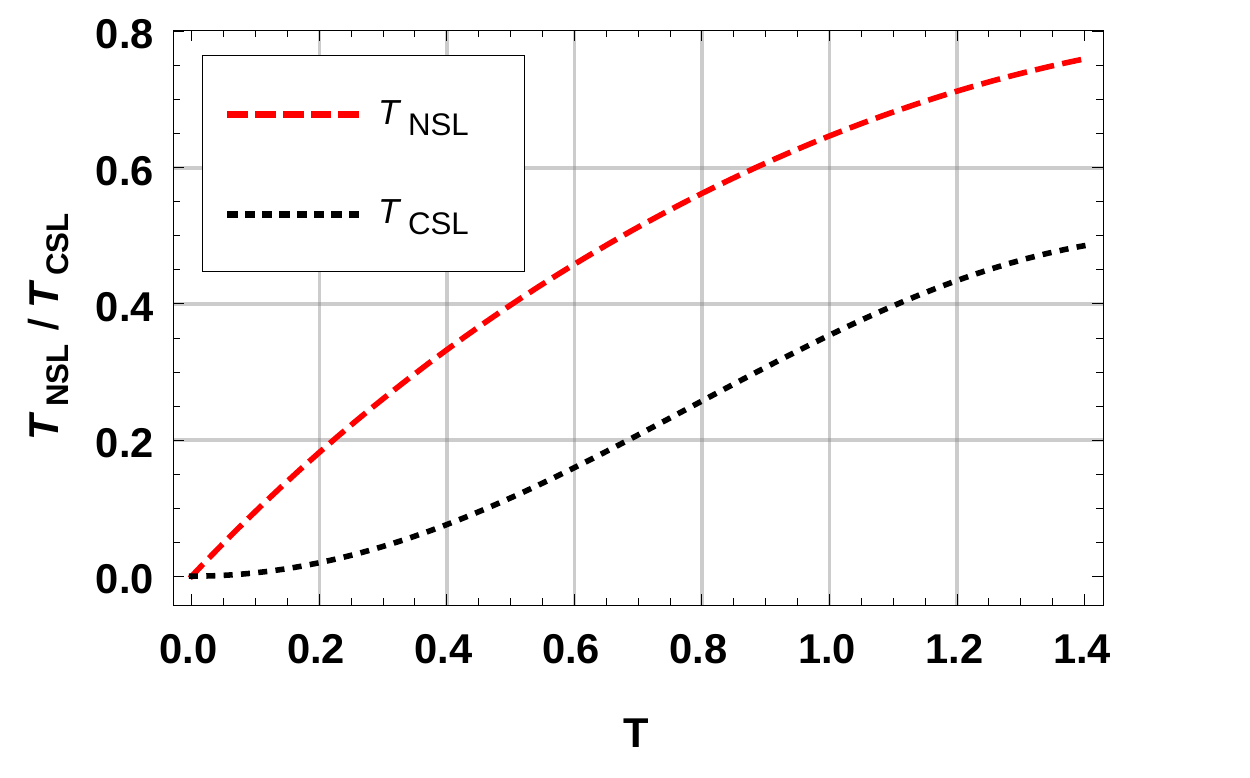}
         \caption{ $T_{\rm NSL}$/$T_{\rm CSL}$ vs $T$ with $\theta=0.5$ and $p=0$.}
         \label{fig:speed_linit_plot_for_theta_equal_0,5}
     \end{subfigure}
     \hfill
 \begin{subfigure}[b]{0.37\textwidth}
         \centering
         \includegraphics[width=\textwidth]{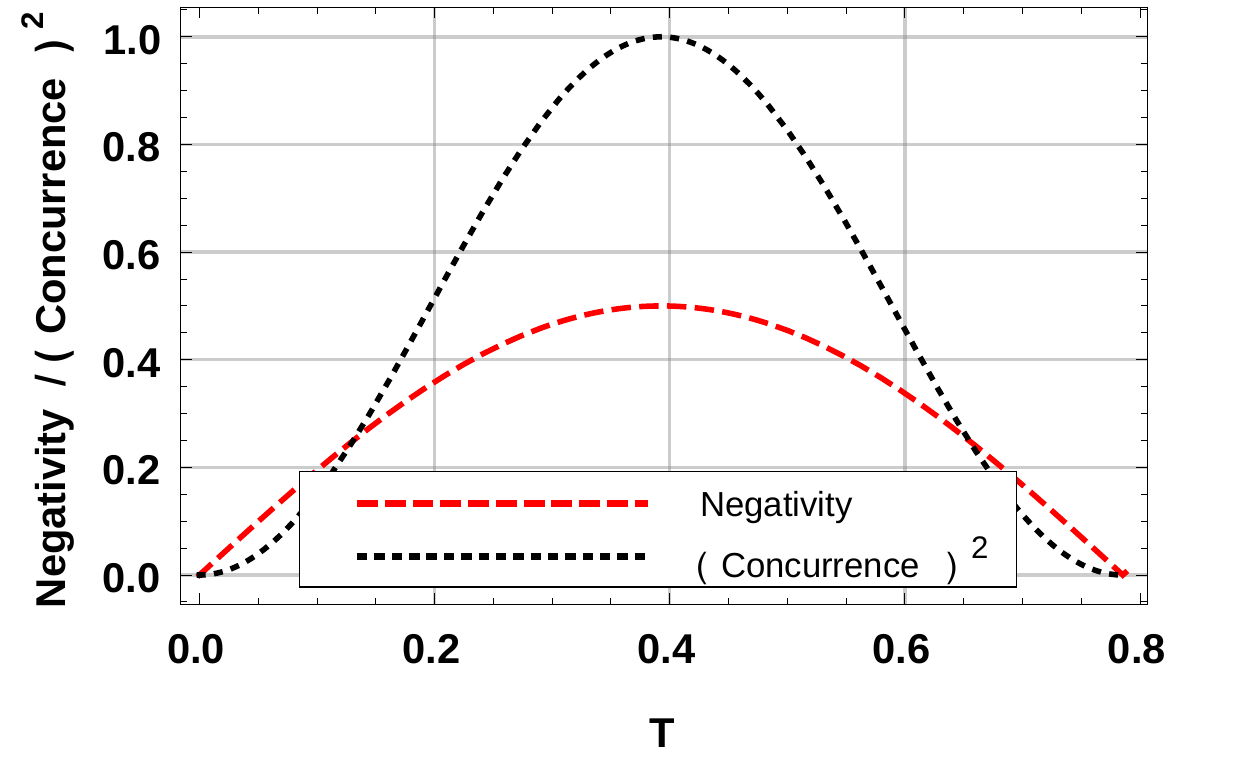}
         \caption{negativity and square of concurrence vs $T$ with $\theta=2$ and $p=0$.}
         \label{fig:theta_equals_1}
     \end{subfigure}
     \hfill
         \begin{subfigure}[b]{0.40\textwidth}
         \centering
         \includegraphics[width=\textwidth]{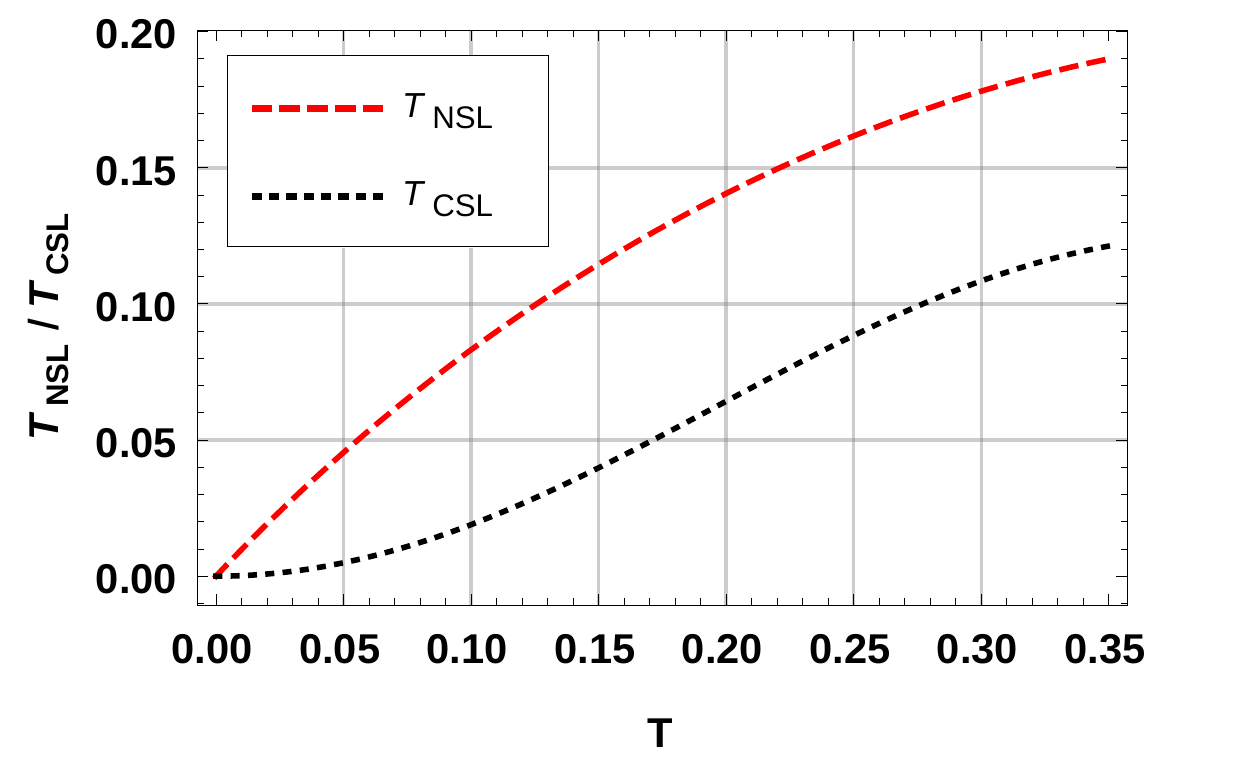}
         \caption{$T_{\rm NSL}$/$T_{\rm CSL}$ vs $T$ with $\theta=2$ and $p=0$.}
         \label{fig:speed_linit_plot_for_theta_equal_1}
     \end{subfigure}
        \caption{Here we depict square of the concurrence and negativity vs $T$ in Fig.~\ref{fig:theta_equals_0.5} and Fig.~ \ref{fig:theta_equals_1}. In Fig.~\ref{fig:speed_linit_plot_for_theta_equal_0,5} and Fig.~\ref{fig:speed_linit_plot_for_theta_equal_1}, we depict $T_{\rm NSL}$ and $T_{\rm CSL}$ vs $T$ with $\theta = \{0.5,2\}$ and initial state with $p=0$.}
        \label{fig:three graphs}
\end{figure}

\section{Proof of Theorem \ref{thm-t-independent}}\label{proof_of_tmh_2}
\begin{proof}
 The square of the concurrence of time evolved bipartite pure state $\psi_{t}$ given by 
\begin{equation}
    \mathscr{C}^{2}(\psi_t) = {\tr(\psi_t\mathcal{R}({\psi^{*}_t}))},
\end{equation}
After differentiating the above equation with respect to time $t$, we obtain
\begin{align}
    \frac{\rm d}{{\rm d} t}\mathscr{C}^{2}(\psi_t) =  \tr\big(\Dot{\psi_t}\mathcal{R}({\psi^{*}_t})\big) + \tr\Big(\psi_t\mathcal{R}(\dot{\psi}^{*}_t)\Big),
\end{align}
where $\mathcal{R}(\cdot)= \sigma_{y}\otimes \sigma_{y} (\cdot) \sigma_{y}\otimes \sigma_{y}$.
Let us now consider the absolute value of the above equation and apply triangular inequality $|A+B|\leq |A|+|B|$. We then obtain the following inequality
\begin{equation}
    \left|\frac{\rm d}{{\rm d}t}\mathscr{C}^{2}(\psi_t)\right| \leq \left|\tr\big(\Dot{\psi_t}\mathcal{R}({\psi^{*}_t})\big)\right|+
    \left|\tr\big(\mathcal{R}(\psi_t)\Dot{\psi_t}^{*}\big)\right|.
\end{equation}
Using Liouville-von Neumann equation $\Dot{\psi}_t=\frac{\iota}{\hbar}[\psi_t,H_{t}]$ for unitary dynamics, the above inequality can be expressed as 
\begin{align}
    \left|\frac{\rm d}{{\rm d}t}\mathscr{C}^{2}(\psi_t)\right| &\leq \frac{1}{\hbar}[\left|\tr([\psi_t,H_{t}]\mathcal{R}({\psi^{*}_t}))\right|\nonumber\\
    &\hspace{0.35cm}+
    \left|\tr(\mathcal{R}(\psi_t)([\psi_t,H_{t}])^{*})\right|].
\end{align}
By applying triangular inequality once more on the above inequality, we then obtain
\begin{align}
    \left|\frac{\rm d}{{\rm d}t}\mathscr{C}^{2}(\psi_t)\right| &\leq \frac{1}{\hbar}[\left|\tr(\psi_t H_{t}\mathcal{R}({\psi^{*}_t}))\right|+\left|\tr(H_{t}\psi_t \mathcal{R}({\psi^{*}_t}))\right|\nonumber\\
    & +\left|\tr(\mathcal{R}(\psi_t)(\psi_t H_{t})^{*})\right|+\left|\tr(\mathcal{R}(\psi_t)(H_{t}\psi_t )^{*})\right|].
\end{align}
Let us apply the Cauchy--Schwarz inequality $|{\rm tr}(AB)|\leq\sqrt{{\rm tr}(A^{\dagger}A){\rm tr}(B^{\dagger}B)}$. We then obtain the following inequality
\begin{align} 
    \left|\frac{\rm d}{{\rm d}t}\mathscr{C}^{2}(\psi_t)\right| &\leq \frac{2}{\hbar}[\sqrt{\tr(\mathcal{R}({\psi^{*}_t})^{\dagger}\mathcal{R}({\psi^{*}_t}))} \sqrt{\tr(\psi_t H_{t}^2)} \nonumber\\
    & +\sqrt{\tr({\mathcal{R}(\psi_t)}^{\dagger}\mathcal{R}(\psi_t))} \sqrt{\tr(\psi_t H_{t}^2)^{*}}.
\end{align}
Since $\psi_t$ is pure state, it implies that $\sqrt{\tr(\mathcal{R}({\psi^{*}_t})^{\dagger}\mathcal{R}({\psi^{*}_t})} = 1$, $\sqrt{\tr(\mathcal{R}(\psi_t)^{\dagger}\mathcal{R}(\psi_t))} = 1$ and $\tr(\psi_t H^{2})^{*} = \tr(\psi_t H^{2})$.
Therefore, we can rewrite the above equation as 
\begin{equation}\label{eq:con-bound}
    \left|\frac{\rm d}{{\rm d}t}\mathscr{C}^{2}(\psi_t)\right| \leq \frac{4}{\hbar}\sqrt{\tr(\psi_t H_{t}^2)}.
\end{equation}
The above inequality~\eqref{eq:con-bound} is the upper bound on that the rate of change of square of the concurrence of the quantum system evolving under unitary dynamics. After integrating the above equation with respect to time $t$, we obtain 
\begin{equation}
   \int_{0}^{T} \left|\frac{\rm d}{{\rm d}t}\mathscr{C}^{2}(\psi_t)\right| \leq \frac{4}{\hbar} \int_{0}^{T}\sqrt{\tr(\psi_t H_{t}^2)}{\rm d}t.
\end{equation}
From the above inequality, we get the desired bound:
\begin{equation}
T \geq  \frac{\hbar}{4}\frac{\left|{ \mathscr{C}^2(\psi_T)}-{ \mathscr{C}^2(\psi_0)}\right|}{\Lambda^{C}_T}.
\end{equation}
\end{proof}

\section{Alternative speed limit on Bell-CHSH correlations}\label{app:chsh}
In this section, we derive speed limits for Bell-CHSH observable whose dynamics governed by separable map. The separable dynamics degrade the Bell-CHSH correlation.
Expectation value of Bell-CHSH observable at time $t$ is given by
\begin{equation}
    \tr(\phi_{t}(\rho)\cal{B})= \tr(\phi_{t}(\rho) a_1\otimes b_1  ) + \tr(\phi_{t}(\rho) a_2\otimes b_2  ),
\end{equation}
where $a_1=\hat{a}.\vec{\sigma}$, $a_2=\hat{a'}.\vec{\sigma}$,  $b_1=(\hat{b}+\hat{b'}).\vec{\sigma}$ and $b_2=(\hat{b}-\hat{b'}).\vec{\sigma}$.
The above equation can be rewritten as
\begin{equation}
    \tr(\rho\phi^{\dagger}_{t}(\cal{B}))= \tr(\rho\phi^{\dagger}_{t}( a_1\otimes b_1)  ) + \tr(\rho\phi^{\dagger}_{t}( a_2\otimes b_2)  ).
\end{equation}
Let us differentiate  above equation with respect to time t, we then obtain
\begin{equation}
     \frac{\rm d}{{\rm d}t}\tr(\rho\cal{B}_t)= \tr(\rho \cal{L}_{A}(a_{1t})\otimes \cal{L}_{B}(b_{1t}))   + \tr(\rho \cal{L}_{A}(a_{2t})\otimes \cal{L}_{B}(b_{2t})),
\end{equation}
where $\cal{L}_{AB}=\cal{L}_A\otimes \operatorname{id}_B + \operatorname{id}_A\otimes \cal{L}_B$, $\operatorname{id}_{A}$ and $\operatorname{id}_{B}$ are the identity super-operators corresponding to system $A$ and $B$ respectively. Let us take the absolute value of the above equation and applying the triangular inequality, we get
\begin{align}
     \left|\frac{\rm d}{{\rm d}t}\tr(\rho\cal{B}_t)\right|\leq & \left|\tr(\rho \cal{L}_{A}(a_{1t})\otimes \cal{L}_{B}(b_{1t}))\right|  \nonumber \\
     & + \left|\tr(\rho \cal{L}_{A}(a_{2t})\otimes \cal{L}_{B}(b_{2t}))\right|.
\end{align}
Let us apply the Cauchy--Schwarz inequality, we then obtain the following inequality
\begin{align}
     \left|\frac{\rm d}{{\rm d}t}\tr(\rho\cal{B}_t)\right|\leq & \sqrt{\tr(\rho^2)} (\norm{\cal{L}_{A}(a_{1t})\otimes \cal{L}_{B}(b_{1t})}_{2} \nonumber \\  & + \norm{ \cal{L}_{A}(a_{2t})\otimes \cal{L}_{B}(b_{2t})}_{2}).
\end{align}
After integrating the above equation with respect to time $t$, we then obtain the following bound
\begin{equation}
 T \geq  T_{\rm BQSL}= \frac{| \langle\cal{B}_{T}\rangle_{\rho}-\langle\cal{B}_{0}\rangle_{\rho}|}{\sqrt{\tr(\rho^2)}\Lambda^{B}_{T}},
 \end{equation}
 where $\Lambda^{B}_{T}=\int_{0}^{T}{\rm d}t (\norm{\cal{L}_{A}(a_{1t})\otimes \cal{L}_{B}(b_{1t})}_{2}   + \norm{ \cal{L}_{A}(a_{2t})\otimes \cal{L}_{B}(b_{2t})}_{2})$.

 The above abound describes how fast Bell-CHSH correlation degrades under separable dynamics.

\section{Proof of speed limit on quantum mutual information~\eqref{eq:theorem}}\label{MI}
\begin{proof}
Consider $\omega^{AB}_0:=\rho^{A}_0\otimes\rho^{B}_0$ to be an initial state (at $t=0$) of the quantum system before interaction or dynamics begin. The relative entropy of the time evolved quantum state $\omega_{t}$ with respect to $\omega_{0}$ is given as
\begin{equation}
     D\left(\omega_{t}\Vert\omega_{0}\right)= \tr
     \{\omega_{t}\ln{\omega_t}-\omega_{t}\ln{\omega_0}\},
\end{equation}
where $\operatorname{supp}(\omega_t)\subseteq\operatorname{supp}(\omega_0)$ for all $t\geq 0$. After differentiating above equation with respect to time $t$, we then obtain~\cite{Spohn1978} (see also Theorem 1 of Ref.\cite{Das2018})
\begin{equation}
   \frac{{\rm d} }{{\rm d}t}  D\left(\omega_t\Vert\omega_{0}\right)= \tr\left(\cal{L}_t\left(\omega_t\right)\left({\Pi}_{t}   \ln\omega_t-\Pi_{0}\ln{\omega_0}\right)\right),
\end{equation}
where $\Pi_t$ denotes the projection onto the support of $\omega_t$. Taking the absolute value of the terms in the above equation and applying the Cauchy--Schwarz inequality, we get
\begin{equation}
   \left|\frac{{\rm d} }{{\rm d}t}  D\left(\omega_t\Vert\omega_{0}\right)\right|\leq \norm{\mathcal{L}_{t}({\omega_t})}_{\rm 2} \norm{\Pi_{t}\ln\omega_t-\Pi_{0}\ln{\omega_0}}_{\rm 2}.
\end{equation}
As $\ln\omega_t$ is defined on the support of $\omega_t$, we have
\begin{equation}
   \abs{\frac{{\rm d} }{{\rm d}t}  D\left(\omega_t\Vert\omega_{0}\right)}\leq \norm{\mathcal{L}_{t}({\omega_t})}_{\rm 2} \norm{\ln\omega_t-\ln{\omega_0}}_{\rm 2}.
\end{equation}
 After integrating above equation with respect to time $t$, we obtain
\begin{equation}
  T\geq  \frac{D\left(\omega_T\Vert\omega_{0}\right)}{ {\Lambda}_{T}^{M}  },
\end{equation}
where $\Lambda^{M}_{T}:=\frac{1}{T}\int_{0}^{T}{\rm d}t\norm{\mathcal{L}_{t}({\omega_t})}_{2} \norm{\ln\omega_t-\ln{\omega_0}}_{2}$. Let use the fact $D\left(\omega_T||\omega_{0}\right)\geq \min \limits_{\{\rho^{A}_0,  \rho^{B}_0\}}D\left(\omega_T||\omega_{0}\right)={I}(A;B)_{\omega_T}$, we then obtain
\begin{equation}
  T\geq  \frac{{I}\left(A;B\right)_{\omega_T}}{ {{\Lambda}}_{T}^{M}  }.
\end{equation}
\end{proof}

\section{Proof of speed limit on the entropy~\eqref{EEbound}}\label{entropy}
Consider the state of a quantum system $A$ evolving under quantum dynamics at time $t$ is given by $\rho_t$. Then the entropy of the quantum system at time $t$ is given by
\begin{equation}
   S(\rho_t)= -{\rm tr}\{\rho_t \ln \rho_t\}.
\end{equation}
It is standard convention that $0\ln 0=0$.
After differentiating the above equation with respect to time $t$, we obtain~\cite{Das2018,Spohn1978}
\begin{equation}
   \frac{{\rm d}}{{\rm d}t} S(\rho_t^{ A})= -{\rm tr}\{\dot{\rho}_t^{ A}\ln \rho_t^{ A}\}=-{\rm tr}\{\mathcal{L}_{t}({\rho_t^{ A}})\ln \rho_t^{ A}\}.
\end{equation}
Let us now consider the absolute value of the above equation and apply the Cauchy--Schwarz inequality. We then obtain the following inequality
\begin{align}\label{eq:entbound}
   \left|\frac{{\rm d} }{{\rm d}t}S(\rho_t^{ A})\right|&= \left|{\rm tr}\{ \mathcal{L}_{t}({\rho_t^{A}}) \ln \rho_t^{A}\}\right| \nonumber\\
   &\leq \norm{\mathcal{L}_{t}({\rho_t^{ A}})}_{2} \norm{ \ln \rho_t^{ A}}_{2},
\end{align}
where $\ln\rho_t$ is defined on the support of $\rho_t$.
The above inequality~\eqref{eq:entbound} is the upper bound on that the rate of change of the entropy of the quantum system evolving under given dynamics. After integrating above equation with respect to time $t$, we obtain 
\begin{equation}\label{MER}
  \int_{0}^{T} {\rm d}t\left|\frac{{\rm d} }{{\rm d}t}S(\rho_t^{ A})\right| \leq \int_{0}^{T} \norm{\mathcal{L}_{t}({\rho_t^{ A}})}_{\rm 2} \norm{\ln \rho_t^{ A}}_{\rm 2}   {\rm d}t.
\end{equation}
From the above inequality, we get the desired bound:
\begin{equation}
  T\geq  \frac{\left|S(\rho_T^{ A}) -S(\rho_0^{ A})\right|}{ {\Lambda}_{T}},
\end{equation}
where $\Lambda_{T}=\frac{1}{T}\int_{0}^{T}{\rm d}t \norm{\mathcal{L}_{t}({\rho_t^{ A}})}_{\rm 2} \norm{\ln \rho_t^{ A}}_{2}$ is the evolution speed of entropy.

Note that if we further apply the Cauchy-–Schwarz inequality in Eq.~\eqref{MER} as done in the proof argument of Theorem~1 of Ref.~\cite{Mohan2022}, we get comparatively weaker bound, see Eq.~(6) of Ref.~\cite{Mohan2022}. This observation of using the Cauchy--Schwarz inequality only once also allows for obtaining tighter bounds for other informational measures discussed in Ref.~\cite{Mohan2022}.

\section{Amplitude Damping Process}\label{apmlitude_damping_process}

We now consider dynamical processes describable by amplitude-damping channel. The Lindbladian operators for amplitude damping process are given as $L^{A}_{amp} = \sqrt{\frac{\gamma^A}{2}} \sigma^A_{-}\otimes\mathbbm{1}_{B}$ and $L^B_{amp} = \sqrt{\frac{\gamma^B}{2}}\mathbbm{1}_{A} \otimes\sigma^B_{-} $, where $\sigma^A_{-} \equiv \ket{1}\bra{0}_{A}$ and $\sigma^B_{-} \equiv \ket{1}\bra{0}_{B}$ are spin lowering operators and $\gamma^A,\gamma^B\in\mathbbm{R}$ denote the strength of amplitude damping. The LGKS master equation governs the time evolution of bipartite state  $\rho_{t}$ in Schr\"odinger picture and Bell-CHSH observable $\cal{B}_{t}$ in Heisenberg picture:
 \begin{align}
      \frac{\rm d}{{\rm d}t}\rho_{t} &= \frac{\gamma^A}{2} (2\sigma^A_{-}\otimes \mathbbm{1}_{B}(\rho_{t})\sigma^A_{+}\otimes \mathbbm{1}_{B} -\{\sigma^A_{+}\sigma^A_{-}\otimes \mathbbm{1}_{B},\rho^{AB}_{t}\})\nonumber\\
  & \hspace{0.35cm}+\frac{\gamma^B}{2} (2 \mathbbm{1}_{A}\otimes\sigma^B_{-}(\rho_{t})\mathbbm{1}_{A}\otimes\sigma^B_{+} -\{\mathbbm{1}_{A}\otimes \sigma^B_{+}\sigma^B_{-},\rho^{AB}_{t}\}),\\
      \frac{\rm d}{{\rm d}t}\cal{B}_{t} &= \frac{\gamma^A}{2} (2\sigma^A_{+}\otimes \mathbbm{1}_{B}(\cal{B}_{t})\sigma^A_{-}\otimes \mathbbm{1}_{B} -\{\sigma^A_{+}\sigma^A_{-}\otimes \mathbbm{1}_{B},\cal{B}_{t}\})\nonumber\\
  & \hspace{0.35cm}+\frac{\gamma^B}{2} (2 \mathbbm{1}_{A}\otimes\sigma^B_{+}(\cal{B}_{t})\mathbbm{1}_{A}\otimes\sigma^B_{-} -\{\mathbbm{1}_{A}\otimes \sigma^A_{+}\sigma^A_{-},\cal{B}_{t}\}),
 \end{align}
 where $\sigma^A_{+} \equiv \ket{0}\bra{1}_{A}$ and $\sigma^B_{+} \equiv \ket{0}\bra{1}_{B}$ are spin raising operator corresponding to $A$ and $B$ respectively. The solutions of the above equations for amplitude damping process is given by 
  \begin{align}
      \rho_{t} &= p {\rm e}^{-2 \gamma t}  \ket{00}\bra{00}+ \sqrt{(1-p) p} {\rm e}^{-2 \gamma  t} \left(\ket{00}\bra{11}+\ket{11}\bra{00}\right)\nonumber \\
      &\hspace{0.35cm}+p {\rm e}^{-2 \gamma  t} \left({\rm e}^{\gamma  t}-1\right)\left(\ket{01}\bra{01}+\ket{10}\bra{10}\right)\nonumber \\
      & \hspace{0.35cm}+ (1-p+p {\rm e}^{-2 \gamma  t} \left({\rm e}^{\gamma  t}-1\right)^2)
      \ket{11}\bra{11},\\
      \cal{B}_{t} &={\rm e}^{-2 \gamma  t} \left(8 \cos (\eta )-8 \cos (\eta ) {\rm e}^{\gamma  t}+2 \cos (\eta ) {\rm e}^{2 \gamma  t}\right)\ket{00}\bra{00}\nonumber \\
      &\hspace{0.35cm}+{\rm e}^{-\gamma t} \left(2 \cos (\eta ) {\rm e}^{\gamma  t}-4 \cos (\eta )\right)\left(\ket{01}\bra{01}+\ket{10}\bra{10}\right) \nonumber \\
     & \hspace{0.35cm} +2 \cos (\eta )\ket{11}\bra{11}+ 2 \sin (\eta ) {\rm e}^{-\gamma  t}\left(\ket{00}\bra{11}+\ket{10}\bra{01}\right. \nonumber\\
     & \hspace{0.35cm}+\left.\ket{01}\bra{10}+\ket{11}\bra{00}\right),
  \end{align}
  where we assumed $\gamma^A = \gamma^B = \gamma$. To estimate bounds on the negativity~\eqref{eq:theorem-1} and Bell-CHSH observable~\eqref{eq:theorem-3}, we need the following quantities:
\begin{align}
 \left | \mathscr{N}(\rho_{T})-\mathscr{N}(\rho_{0})\right| &=\left| -{\rm e}^{-2 \gamma  t}\left(-\sqrt{(p-p^2 )}{\rm e}^{ \gamma  t}+p {\rm e}^{\gamma  t}-p\right)\nonumber\right. \\
  &\hspace{0.4cm}\left.-\sqrt{p-p^2}\right|,\\
 \norm{\cal{L}_t(\rho_{t}^{T_{B}})}_{1} &= {\rm e}^{-2 \gamma  t}\bigg({\rm e}^{2\gamma t}\left(\sqrt{c-2\sqrt{d}}+ \sqrt{c+2\sqrt{d}} \right)\nonumber\\
 & \hspace{0.35cm}  +2 c' {\rm e}^{\gamma t}+2  \gamma p \bigg),
 \end{align}
 \begin{align}
  \left| \langle\cal{B}_{T}\rangle_{\rho_{0}}-\langle\cal{B}_{0}\rangle_{\rho_{0}}\right| &=\left| {\rm e}^{\gamma t} \sin (\eta )\sqrt{1-p}+2 \cos (\eta ) \sqrt{p}\right|\nonumber \\
  &\hspace{0.35cm}4 \sqrt{p} {\rm e}^{-2 \gamma  t} \left({\rm e}^{\gamma  t}-1\right),\\
     \min\left\{\Lambda_{T}^{\infty},\Lambda_{T}^{1},\Lambda_{T}^{2}\right\}& =2 \gamma  {\rm e}^{-2 \gamma  t}\bigg(\sqrt{1-d'}+\sqrt{d'+1}\nonumber\\
     &\hspace{0.35cm}+\sqrt{2-2\sqrt{\left(1-d'^2\right)}}\bigg)\label{equ:singular_value_of_bell_operator_in_amplitude_damping_process},
\end{align}
where $c$, $d$, $c'$, and $d'$ are functions of $\gamma$, $t$, and $p$ such that $c=\gamma ^2 p {\rm e}^{-4 \gamma  t} \left(-4 p {\rm e}^{\gamma  t}+4 p+{\rm e}^{2 \gamma  t}\right)$, $d= \gamma ^4 \left({\rm e}^{-6 \gamma  t}\right)(1-p) p^3  \left({\rm e}^{\gamma  t}-2\right)^2$, $c'= \sqrt{\gamma ^2 p^2 {\rm e}^{-2 \gamma  t} \left({\rm e}^{\gamma  t}-1\right)^2}$, and $d'=\sin(2\eta)$.

\begin{figure}
     \centering
     \begin{subfigure}[b]{0.44\textwidth}
         \centering
         \includegraphics[width=\textwidth]{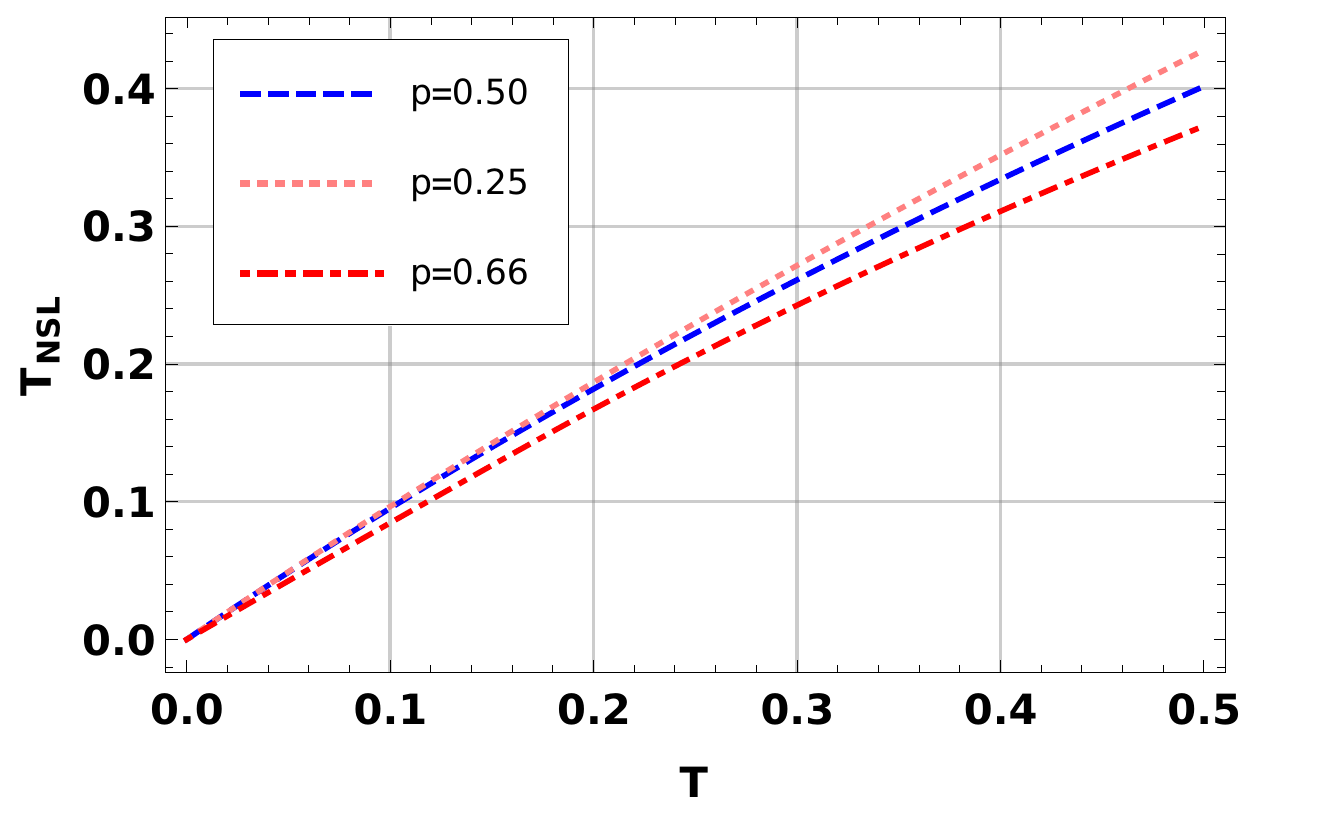}
         \caption{$T_{\rm NSL}$ vs $T$ with initial state parameter $p=\{0.25,0.50,0.66\}$.}
         \label{fig:Amplitude-Damping:Neg}
     \end{subfigure}
     \hfill
     \begin{subfigure}[b]{0.44\textwidth}
         \centering
         \includegraphics[width=\textwidth]{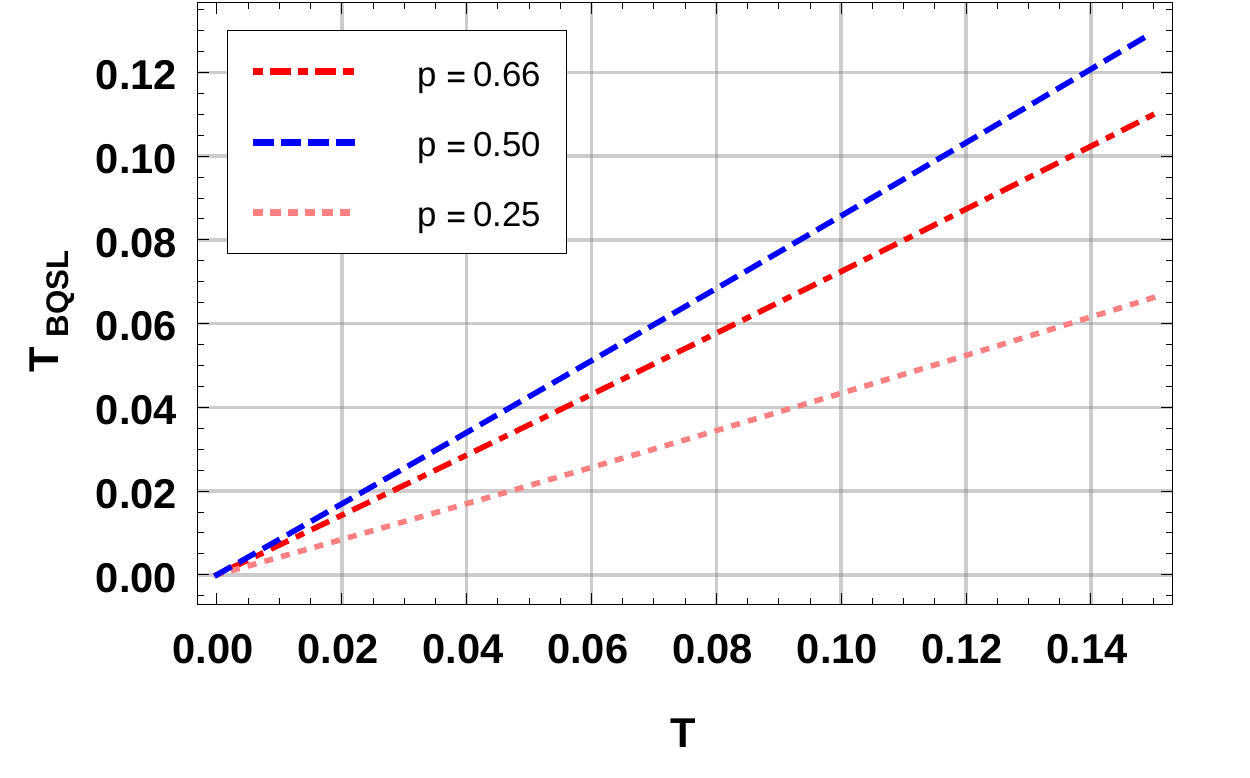}
         \caption{$T_{\rm BQSL}$ vs $T$ with initial state parameter $p=\{0.25,0.50,0.66\}$.}
         \label{fig:Amplitude-Damping:Bell}
     \end{subfigure}
  \caption{For amplitude damping process with $\gamma = 1$, we depict $T_{\rm NSL}$~\eqref{eq:theorem-1} vs $T$ in (a) and $T_{\rm BQSL}$~\eqref{eq:theorem-3} vs $T$ in (b).}
    \label{fig:Amplitude-Damping}
\end{figure}  
Note that Eq.~\eqref{equ:singular_value_of_bell_operator_in_amplitude_damping_process} is only valid in the interval  $t \in [0,0.15]$.

In Fig.~\ref{fig:Amplitude-Damping:Neg}, we plot $T_{\rm NSL}$~\eqref{eq:theorem-1} vs $T$ where $T\in [0,0.50]$ for amplitude damping process when $\gamma = 1$ and $p\in\{0.25,0.50,0.66\}$. We find that the negativity degrades faster for $p\in\{0.50,0.66\}$ in comparison to $p=0.25$. We note that the speed limit on the negativity~\eqref{eq:theorem-1} is not tight for given amplitude damping process.

In Fig.~\ref{fig:Amplitude-Damping:Bell}, we plot $T_{\rm BQSL}$~\eqref{eq:theorem-3} vs $T$ where $T\in [0,0.50]$ for amplitude damping process when $\gamma = 1$ and $p\in\{0.25,0.50,0.66\}$. We find that the Bell-CHSH correlation degrades faster for $p\in\{0.25,0.66\}$ (non-maximally entangled state) in comparison to $p=0.50$ (maximally entangled state). We note that the bound~\eqref{eq:theorem-3} is tight and attainable for amplitude damping process when $\gamma = 1$ and $p=0.50$.

\bibliography{main}

\end{document}